\newtheorem{theorem}{Theorem}[section]
\newtheorem{fact}[theorem]{Fact}
\newtheorem{corollary}[theorem]{Corollary}
\newtheorem{lemma}[theorem]{Lemma}
\newtheorem{definition}[theorem]{Definition}
\newtheorem{conjecture}[theorem]{Conjecture}
\newcommand{\OCC}{\mathrm{OCC}}
\newcommand{\NCC}{\mathrm{OCC^{NOF}}}
\newcommand{\ind}{\mathrm{Ind}}
\newcommand{\gip}{\mathrm{GIP}}
\newcommand{\ext}{\mathrm{ext}}
\newcommand{\EQUAL}{\mathrm{EQ}}
\title{Deterministic Lifting Theorems for One-Way Number-on-Forehead Communication}
\begin{document}
\author{
Guangxu Yang \thanks{Research supported by NSF CAREER award 2141536.\\ Thomas Lord Department of Computer Science, University of Southern California.  \\ Email: \texttt{\{guangxuy, jiapengz\}@usc.edu}}
\\
\and
Jiapeng Zhang \footnotemark[1]
}
\maketitle
\begin{abstract}

Lifting theorems are one of the most powerful tools for proving communication lower bounds, with numerous downstream applications in proof complexity, monotone circuit lower bounds, data structures, and combinatorial optimization. However, to the best of our knowledge, prior lifting theorems have primarily focused on the two-party communication.

In this paper, we propose a new lifting theorem that establishes connections between the \textit{two-party} communication and the \textit{Number-on-Forehead (NOF)} communication model. Specifically, we present a deterministic lifting theorem that translates one-way two-party communication lower bounds into one-way NOF lower bounds.

Our lifting theorem yields two applications. First, we obtain an optimal explicit separation between randomized and deterministic \textit{one-way NOF communication}, even in the multi-player setting. This improves the prior square-root vs. constant separation for three players established by Kelley and Lyu (arXiv 2025). Second, we achieve \textit{optimal separations} between one-round and two-round deterministic NOF communication, improving upon the previous separation of \( \Omega\left(\frac{n^{1/(k-1)}}{k^k}\right) \) vs. \( O(\log n) \) for \( k \) players, as shown by Viola and Wigderson (FOCS 2007).

On the technical side, we use the \textit{generalized inner product function over large fields (GIP)} as the gadget. As a bonus result, we construct an explicit function that for any number of players $k$, at least $\Omega(n / 2^k)$ bits of communication are required to solve it, even for randomized NOF protocols. This improves the previous best bound of $\Omega(n / 4^k)$ for generalized inner product over the binary field, established by Babai, Nisan, and Szegedy (STOC 1989). 

Beyond the lifting theorems, we also apply our techniques to the disjointness problem. In particular, we provide a new proof that the deterministic one-way three-party NOF communication complexity of set disjointness is $\Omega(n)$, further demonstrating the broader applicability of our methods.

\end{abstract}
\section{Introduction}
Lifting theorems are a generic method for translating lower bounds from weaker computational models to relatively stronger ones. A representative example of lifting theorems is the query-to-communication lifting theorems, which convert lower bounds in query complexity into communication complexity lower bounds, using a suitable base function composed with a gadget.

There has been a long line of work developing query-to-communication lifting theorems, with diverse applications in various areas such as monotone circuit complexity, proof complexity, combinatorial optimization, and more \cite{raz1997separation, zhang2009tightness, goos2018deterministic,pitassi2017strongly,goos2020query,chattopadhyay2019query,lovett2020lifting,Collision,mao2025gadgetless}. Though these lifting theorems have achieved significant success, to the best of our knowledge, all existing lifting theorems have focused on two-party communication models (or on a slight generalization known as the Number-in-Hand model). No lifting theorems have yet been established for the more powerful models, such as the Number-on-Forehead (NOF) models. 

Compared to two-party communication lower bounds, lower bounds in NOF models are known to have more connections to several major open problems. However, existing lower bound techniques for NOF models remain quite limited. Motivated by this, papers by Kumar, Meka, and Zuckerman~\cite{kumar2020bounded} and Toniann Pitassi~\cite{NOFlifting} raised a natural question: 

\begin{center}
\textit{Can we develop lifting theorems in NOF models and derive interesting applications?}
\end{center}

Though this problem is well-motivated, it remains wide open. Furthermore, we do not even have good candidates for conjectured statements for the lifting theorem in NOF models. In this paper, we propose a lifting theorem that converts two-party communication lower bounds into lower bounds in the NOF models. As a first step in this direction, we prove a deterministic lifting theorem that lifts one-way communication lower bounds in the two-party setting to one-way communication lower bounds in NOF protocols. Although the one-way NOF model may seem like a restricted variant of the general NOF model, since each player is allowed to speak only once in a fixed order, it still enables many surprising applications. Notably, it has been instrumental in proving strong lower bounds in areas such as circuit complexity~\cite{HG90, pudlak1997boolean, chakrabarti2007lower, VW07}, cryptography~\cite{chor1998private, brody2017position}, and streaming algorithms~\cite{kallaugher2019complexity, VW07}. Despite its restricted nature, proving lower bounds in the one-way NOF model remains a highly challenging and fundamental problem.

\subsection{Our Contribution}

In this paper, we propose a new form of lifting two-party communication problems into NOF communication problems.

\begin{definition}
For any two-party function $f: [q] \times [q] \rightarrow \{0,1\}$ and a gadget function $g^{k}: [N]^{k} \rightarrow [q]$, the lifted problem over $(k+1)$ players, denoted by, $
f \circ g^{k}: [N] \times \cdots \times [N] \times [q] \rightarrow \{0,1\}$,
is defined as
\[
f \circ g^{k}(x_1, \dots, x_k, z) = f(z, g^{k}(x_1, \dots, x_k)).
\]
In the NOF setting, we assume that the $(k+1)$-th player knows the input $(x_1,\dots,x_k)$, while each of the remaining $i$-th players knows all inputs except $x_i$.
\end{definition}
We observe that many well-known problems in the NOF model can be expressed as lifted problems of the form $f \circ g$. Examples include \textit{Generalized Inner Product}~\cite{rao2020communication}, \textit{Set Disjointness}~\cite{rao2020communication}, \textit{Pointer Jumping}~\cite{VW07,chakrabarti2007lower}, \textit{Shifting}~\cite{pudlak1997boolean}, and the \textit{Exactly-$N$} problem~\cite{chandra1983multi}. As a quick example, consider the \textit{Exactly-$N$} problem: Alice, Bob, and Charlie are given integers $x$, $y$, and $z$, and the goal is to determine whether $x + y + z = N$. We can express this as a composition $\EQUAL \circ g$ by defining $g(x, y):= N - x - y$, and $\EQUAL(z, s) = 1$ if and only if $z = s$. Then, the condition $x + y + z = N$ holds if and only if $f(z, g(x, y)) = 1$.

In this paper, we primarily use the generalized inner production function as the gadget $g$.

\begin{definition}
Let $q$ be a prime and $k, r > 0$. We define the function $\gip_{q,r}^{k}: (\mathbb{F}_{q}^{r})^{k} \rightarrow \mathbb{F}_q$ by
\[
\gip_{q,r}^{k}(x_1,\dots,x_k) = \sum_{j \in [r]} \prod_{i \in [k]} x_{i,j},
\]
where $\mathbb{F}_q$ is a prime field, and all arithmetic operations are over $\mathbb{F}_q$. When $q, r, k$ are clear from context, we write $\gip(x_1,\dots,x_k)$ for simplicity.
\end{definition}

Using the generalized inner product function, we show that for any two-party function \( f \), the \textit{one-way communication} lower bound of \( f \), denoted \( \mathrm{OCC}(f) \), can be lifted to a \textit{one-way NOF} lower bound of \( f \circ \gip \), denoted \( \mathrm{OCC^{NOF}}(f \circ \gip) \).

\begin{theorem}\label{main_theorem}
Let $q$ be a prime and $k, r \geq 2^{k+1}$. For any Boolean function \( f : [q] \times [q] \rightarrow \{0,1\} \), we have
\[
\mathrm{OCC^{NOF}}(f \circ \gip_{q,r}^{k}) = \Theta(\mathrm{OCC}(f)).
\]
\end{theorem}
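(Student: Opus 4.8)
The plan is to prove the two inequalities inside the $\Theta$ separately; the bound $\NCC(f\circ\gip_{q,r}^{k})=O(\OCC(f))$ is an easy simulation, and the reverse inequality is the real content. For the upper bound, note that in the lifted problem each of the first $k$ players knows $z$ (the first argument of $f$) while player $k+1$ can evaluate $\gip_{q,r}^{k}(x_1,\dots,x_k)$ (the second argument). So, starting from an optimal one-way two-party protocol for $f$ in which the holder of $z$ sends its single message and the other party outputs, I would let player $1$ send that same message — a function of $z$ alone — let players $2,\dots,k$ be silent, and let player $k+1$ compute $\gip$ and apply the output rule; the cost is $\OCC(f)$. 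If the relevant convention instead has the holder of the second argument speak first, the symmetric simulation (player $k+1$ speaks first, some player in $\{1,\dots,k\}$ outputs) matches it.

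For the lower bound, given a deterministic one-way NOF protocol $\Pi$ for $f\circ\gip_{q,r}^{k}$ of cost $c$, I would extract a one-way two-party protocol for $f$ of cost $O(c)$; then $\OCC(f)\le O(c)$, since a one-way deterministic protocol of cost $c'$ for $f$ exists iff $f$ has at most $2^{c'}$ distinct rows (or, in the other direction, columns). The natural reduction has Alice hold $z$, Bob hold a field element $b$, Bob pick a preimage $\mathbf{x}$ with $\gip_{q,r}^{k}(\mathbf{x})=b$, and both parties run $\Pi$ on $(\mathbf{x},z)$. The obstruction to isolate first is that player $1$'s message $M_1(z,x_2,\dots,x_k)$ depends jointly on Alice's $z$ and on part of Bob's encoding, and no freezing of $\mathbf{x}$ removes this cleanly: if $x_2,\dots,x_k$ are fixed then $M_1$ becomes a function of $z$ only, but then $\gip(\mathbf{x})=\langle x_1,w\rangle$ for a fixed known $w$, so the remaining players can read $b$ off $x_1$ and relay, say, $[z=b]$, producing an $|\mathbb{F}_q|$-factor blow-up that is fatal when $\OCC(f)=\Theta(\log q)$ (e.g.\ for $f=\eq$).

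To avoid the blow-up I would keep the encoding in general position so that no player $i\le k$ can determine $\gip(\mathbf{x})$ from its view; this is available because $\gip_{q,r}^{k}$ is multilinear over the prime field $\mathbb{F}_q$, so whenever the co-products $w^{(i)}_j=\prod_{i'\ne i}x_{i',j}$ are nonzero the map $x_i\mapsto\gip(\mathbf{x})=\langle x_i,w^{(i)}\rangle$ is a surjective $\mathbb{F}_q$-linear functional with equal-size fibers — both hiding $\gip(\mathbf{x})$ from player $i$ and letting Bob realize any $b\in\mathbb{F}_q$. The residual joint dependence of the messages would then be handled by a pruning argument on the input domain: processing the at most $k$ messages in order, and budgeting one fresh block of gadget coordinates for each of the $2^{k+1}$ subsets of players (this is where $r\ge 2^{k+1}$ enters), I would restrict to a subdomain on which $\gip$ stays surjective, the surviving fraction stays at least $2^{-O(c)}$, and each NOF message collapses to a function of only one of the two parties' inputs together with the earlier transcript; then the whole transcript, and player $k+1$'s output rule (which it can evaluate, since it reconstructs $\gip$ of Bob's encoding), are reproducible by $O(c)$ bits of one-way two-party communication. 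The main obstacle is exactly this decoupling step: unlike in two-party query-to-communication lifting there is no gadget attached to each coordinate of $f$ to localize the communication, so the collapse has to be engineered purely from the algebra of $\gip$ and the coordinate budget $r\ge 2^{k+1}$, and keeping the surviving density under control against the $|\mathbb{F}_q|$-sized second argument of $f$ is the crux.
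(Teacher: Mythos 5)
Your upper-bound simulation is fine and matches the paper. For the lower bound, however, there is a genuine gap, and it is precisely the step you flag as ``the main obstacle.''

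Your strategy is protocol extraction: take a cheap one-way NOF protocol $\Pi$ for $f\circ\gip$, restrict the gadget inputs so that each NOF message depends on only one of Alice's or Bob's halves (plus the earlier transcript), and thereby simulate $\Pi$ with an $O(c)$-bit one-way two-party protocol for $f$. The hinge is the claim that, after processing the $k$ messages in order and ``budgeting one fresh block of gadget coordinates for each of the $2^{k+1}$ subsets of players,'' each message collapses to depend on a single side while $\gip$ stays surjective and the surviving density stays $2^{-O(c)}$. That collapse is asserted, not established: player $i$'s NOF message reads \emph{all} coordinates of $x_1,\dots,x_{i-1},x_{i+1},\dots,x_k$ and $z$, so there is no local block structure to exploit, and no argument is offered for why restricting some coordinates (rather than conditioning on message values, which would compound the density loss across rounds) makes each message univariate. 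Your reading of $r\ge 2^{k+1}$ as ``$2^{k+1}$ budgeted blocks'' also does not match how the condition is actually used. So the reduction does not go through as sketched.

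The paper sidesteps extraction entirely and runs a fooling-pair argument tailored to the one-way model. Using the Feder--Kushilevitz--Naor--Nisan characterization ($\OCC(f)=\log|Z|$ where $Z$ is a maximal set of distinct rows of $M(f)$), it assumes the NOF cost is $<(\log|Z|)/3$ and, by pigeonhole, finds a transcript $(\pi_1,\dots,\pi_k)$ of the first $k$ players whose preimage $A\subseteq[N]^k\times Z$ is large. A bipartite-graph counting lemma then produces two distinct rows $z_0^*,z_1^*\in Z$ whose $x$-fibers in $A$ have a common intersection $S$ of size $\geq q^{rk-1}$; crucially, for fixed $z$ each such fiber is a cylinder intersection, so $S$ is too. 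The disperser lemma (where $r\ge 2^{k+1}$ actually enters, to make the BNS-style bound $q\cdot(2k/q)^{r/2^{k-1}}$ smaller than $1/q$) then says every $v\in\mathbb{F}_q$ is hit by $\gip$ on $S$, in particular the $v$ witnessing $f(z_0^*,v)\neq f(z_1^*,v)$. That gives $(x^*,z_0^*),(x^*,z_1^*)\in A$ with the same transcript for the first $k$ players but different function values, so the last player cannot answer correctly. No decoupling or two-party simulation is ever needed; the one-way structure is used only through the distinct-rows characterization and the fact that the last player is the one who must output.
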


Although we only prove the theorem for one-way communication, we conjecture that it also holds for general NOF communication. 

\subsubsection{Applications of Our Main Theorem}
We now discuss some applications of our main theorem.
\paragraph{Deterministic vs. randomized separation.}
One of the central goals in communication complexity is understanding the power of randomized communication. Separations between deterministic and randomized communication complexity in NOF have received a good amount of attention in recent years \cite{beame2010separating, kelley2024explicit, jaber2025, kelley2025efficientsiftinggridnorms}. Beame, David, Pitassi, and Woelfel~\cite{beame2010separating} showed the existence of a three-party function \( f : [N]^3 \to \{0,1\} \) with randomized NOF complexity \( O(1) \) and deterministic NOF complexity \( \Omega(\log N) \). However, their result was non-explicit, relying on a counting argument.

For explicit constructions, the \textit{Exactly-N} function introduced by \cite{chandra1983multi} has long been considered a strong candidate for such separations. While its randomized three-party complexity is \( O(1) \), a deterministic lower bound was not known until the recent breakthrough by Jaber, Liu, Lovett, Ostuni, and Sawhney~\cite{jaber2025}, who showed an $\Omega((\log N)^{\Omega(1)})$ lower bound. For other explicit constructions, Kelley and Lyu~\cite{kelley2025efficientsiftinggridnorms} established an $\Omega\left(\sqrt{\log N}\right)$ vs.\ $O(1)$ separation.

Though these advances are very impressive, two key challenges remain. First, for existing approaches including~\cite{kelley2024explicit, jaber2025, kelley2025efficientsiftinggridnorms}, it is still unknown how to break the square-root barrier. That is, it remains unclear how to achieve an \( \Omega(\log N) \) vs. \( O(1) \)  optimal separation, even in the one-way setting. Second, proving deterministic lower bounds for more than three players remains difficult. For instance, the result of~\cite{jaber2025} yields only an \( \Omega(\log\log\log\log N) \) vs.\ \( O(1) \) separation for four players.

In this paper, we suggest a lifting-based approach. In the two-party setting, the equality function is a well-known example demonstrating a separation between deterministic and randomized communication complexity. As an application of Theorem~\ref{main_theorem}, we obtain a NOF separation by lifting the equality function..

\begin{theorem}
For any $k \geq 2$ and $r \geq 2^{k+1}$, the deterministic one-way NOF complexity of $\EQUAL \circ \gip_{q,r}^{k}$ is $\Omega(\log q)$, while its randomized one-way NOF complexity is $O(1)$.
\end{theorem}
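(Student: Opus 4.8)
The plan is to derive both bounds from Theorem~\ref{main_theorem} together with the standard facts about the two-party equality function $\EQUAL : [q] \times [q] \to \{0,1\}$. Recall that $\EQUAL$ has deterministic one-way communication complexity $\OCC(\EQUAL) = \Omega(\log q)$ (Alice must essentially reveal enough about her input to distinguish all $q$ possibilities, so any one-way protocol uses $\lceil \log_2 q \rceil$ bits), while its randomized one-way complexity is $O(1)$ (Alice sends a constant-size fingerprint, e.g. the evaluation of a random low-degree polynomial or a random hash of her input, and Bob compares). These are textbook facts that I would cite rather than reprove.

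For the deterministic lower bound, I would simply invoke Theorem~\ref{main_theorem} with $f = \EQUAL$: since $k \ge 2$ and $r \ge 2^{k+1}$, the hypotheses of the theorem are met, so
\[
\NCC(\EQUAL \circ \gip_{q,r}^{k}) = \Theta(\OCC(\EQUAL)) = \Theta(\log q),
\]
which in particular gives the claimed $\Omega(\log q)$ deterministic one-way NOF lower bound. This direction is immediate once the main theorem is in hand; the only thing to double-check is that $\OCC(\EQUAL) = \Theta(\log q)$ rather than, say, $O(1)$, which is the content of the deterministic fooling-set/rank argument for equality.

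For the randomized upper bound, I would exhibit an explicit $O(1)$-cost randomized one-way NOF protocol for $\EQUAL \circ \gip_{q,r}^{k}$. The key observation is that in the NOF setting of this lifted problem, the first $k$ players all see the input $z$ of the $(k+1)$-th player, and moreover player $k+1$ sees all of $x_1,\dots,x_k$ and hence can itself compute $s := \gip_{q,r}^{k}(x_1,\dots,x_k)$. So the protocol can be: the first $k-1$ players say nothing; player $k$ (who knows $z$) forwards nothing either; and player $k+1$, knowing both $z$ and $s = \gip(x_1,\dots,x_k)$, must announce whether $z = s$. But a single player announcing a bit it can compute locally is not a communication problem at all — I need to be careful about who is the "last" player and what the output convention is. The clean way: let player $k+1$ not be last; instead, order the players so that some player who does \emph{not} know $x_{k+1}$'s worth of information speaks last. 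Concretely, player $k+1$ knows $z$ and can compute $s$, so player $k+1$ broadcasts a constant-size randomized fingerprint $h(z)$ (a random hash into a small range), and then player $1$ — who sees $x_2,\dots,x_k$ and $z$ but not $x_1$, and who also sees $z$ directly — wait, player $1$ also knows $z$, so it can just compare $h(z)$ with... this collapses. The honest resolution is that the randomized one-way NOF complexity of \emph{any} function where the last player can determine the output from its own input and the transcript is trivially $O(1)$; since player $k+1$ knows $z$ and all the $x_i$'s, if player $k+1$ speaks last it outputs the answer for free, giving cost $0$, and if instead the speaking order is fixed with player $k+1$ \emph{not} last, then player $k+1$ sends $h(z)$ for a random hash $h$ with $O(1)$-bit output and a $\Theta(\log q)$-bit... no: to get $O(1)$ total we use that a \emph{randomized} fingerprint of $z$ suffices, sent by player $k+1$, and the genuinely last player (who also sees $z$... everyone after player $1$ sees $z$) recomputes. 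I would therefore just state: since every player except possibly one sees $z$, and player $k+1$ can compute $s = \gip(x_1,\dots,x_k)$, there is a trivial $O(1)$ (indeed $0$-cost, or $1$-bit) protocol — so the randomized upper bound is essentially free and the real content is the lower bound via Theorem~\ref{main_theorem}.

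The main obstacle, then, is not mathematical depth but bookkeeping about the NOF input partition and the one-way speaking order: I must pin down exactly which player speaks last and confirm that, under that convention, (i) the deterministic cost is still governed by $\OCC(\EQUAL)$ so that Theorem~\ref{main_theorem} applies with a nontrivial lower bound, and (ii) the randomized protocol genuinely costs $O(1)$ and does not secretly rely on a player seeing information it shouldn't. Once the conventions match those used in the statement of Theorem~\ref{main_theorem}, the proof is a two-line consequence: plug $f = \EQUAL$ into the main theorem for the lower bound, and give the trivial fingerprinting protocol for the upper bound.
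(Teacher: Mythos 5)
Your overall strategy is the same as the paper's: the theorem is stated as an immediate application of Theorem~\ref{main_theorem}, so the lower bound is correctly obtained by plugging in $f = \EQUAL$ and using $\OCC(\EQUAL) = \Theta(\log q)$, and the upper bound should come from a standard fingerprinting protocol. That part is fine.

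However, your discussion of the randomized upper bound contains a concrete error in bookkeeping the NOF input partition, and it is worth fixing. You repeatedly assert that player $k+1$ ``knows $z$'' and therefore that the protocol has cost $0$; this is wrong. In the NOF model $z$ is precisely player $(k+1)$'s own forehead input, so player $k+1$ sees $(x_1,\dots,x_k)$ but does \emph{not} see $z$, while each of players $1,\dots,k$ sees $z$ but is missing one of the $x_i$'s. Consequently no player can evaluate $\EQUAL(z,\gip(x_1,\dots,x_k))$ locally, and the problem is genuinely nontrivial: player $k+1$ can compute $s := \gip(x_1,\dots,x_k)$ but lacks $z$; player $1$ knows $z$ but lacks $x_1$ and hence cannot compute $s$. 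The correct $O(1)$ protocol is the obvious one once this is sorted out: using public randomness, player $1$ sends a constant-size fingerprint $h(z)$; players $2,\dots,k$ send nothing; player $k+1$ computes $h(s)$ and outputs whether $h(z) = h(s)$. This costs $O(1)$ bits (not $0$), and its correctness is exactly the standard analysis of randomized equality. Your final sentence lands in the right place, but the intermediate discussion should be replaced by this clean protocol; as written, several of the claims (``player $k+1$ knows $z$,'' ``$0$-cost protocol,'' ``everyone after player $1$ sees $z$'') contradict the model.
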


Note that for any constant $k$, we have $\log q = \Theta(\log q^{r}) = \Theta(\log N)$. Therefore, our construction yields an $\Omega(\log N)$ vs.\ $O(1)$ optimal separation for a constant number of players in the one-way NOF setting.

\paragraph{One-round vs. two-round deterministic separation.} Understanding the power of interaction is one of the central objectives in communication complexity \cite{PS84,DGS87,NW93,VW07,mao2025gadgetless}. As another application of our lifting theorem, we obtain an explicit optimal separation between one-round and two-round deterministic NOF communication. Papadimitriou and Sipser~\cite{PS84} initiated the study of how restricting two-party communication protocols to $r$ rounds affects their complexity. Several researchers subsequently explored this fundamental question. Notably, Duris, Galil, and Schnitger~\cite{DGS87} established an exponential separation between $r$ and $r + 1$ rounds in the two-party setting, a result later improved by Nisan and Wigderson~\cite{NW93}. However, for NOF settings involving more than two parties ($k > 2$), the only known result is by Viola and Wigderson \cite{VW07}, they give $\Omega\left(\frac{n^{1/(k-1)}}{k^k}\right)$ vs $O(\log n)$ separation between one-round and two-round deterministic NOF communication via the tree pointer jumping problem. We improve this bound to $\Omega\left(\frac{n}{2^k}\right)$ vs $O(\log n)$. 

The following \textit{index problem} is a well-known example that separates one-round and two-round deterministic communication in the two-party setting.

\begin{definition}
In the index problem, Alice receives a binary string $x \in \{0,1\}^n$, and Bob receives an index $i \in [n]$. The goal is to compute $x_i$.
\end{definition}

It is well known that the one-round communication complexity of the index problem is $\Omega(n)$, while in the two-round setting, it suffices for Bob to send his input using only $\log n$ bits. We now lift the index problem to the NOF setting. 

To fit our lifting framework, we first slightly modify the input of the index problem. We define the modified index function $\ind : [q] \times [q] \rightarrow \{0,1\}$ as follows: for every input $(x, y)$, let $i_y$ be the integer represented by the first $\log\log q$ bits of $y$, and define $\ind(x, y) = x_{i_y}$. Using previous analysis, we know that the one-round communication complexity of $\ind$ is $\Omega(\log q)$ and the two-round cost is $O(\log\log q)$. 

By applying our lifting theorem (Theorem~\ref{main_theorem}) again, we obtain the following:

\begin{theorem}
For $k \geq 2$ and $r = 2^{k+1}$, the deterministic one-way NOF communication complexity of $\ind \circ \gip_{q,r}^{k}$ is $\Omega(\log q)$, while its two-round NOF communication complexity is $O(\log\log q)$.
\end{theorem}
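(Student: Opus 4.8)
The plan is to apply Theorem~\ref{main_theorem} directly to the modified index function $\ind$, together with a matching upper bound obtained by directly simulating the two-round two-party protocol for $\ind$ in the NOF model. For the lower bound: by the analysis preceding the statement, the one-round two-party communication complexity of $\ind$ satisfies $\OCC(\ind) = \Omega(\log q)$, since Alice must essentially reveal the bit indexed by Bob, and Bob's index $i_y$ ranges over $[\log q]$ while Alice's string $x$ has $\log q$ relevant coordinates — so by a standard fooling-set / counting argument on the one-way transcript, $\Omega(\log q)$ bits are required. Plugging $f = \ind$ into Theorem~\ref{main_theorem} with the stated parameters $k \geq 2$ and $r = 2^{k+1}$ (which satisfies the hypothesis $r \geq 2^{k+1}$), we immediately get
\[
\NCC(\ind \circ \gip_{q,r}^{k}) = \Theta(\OCC(\ind)) = \Omega(\log q),
\]
which is the claimed one-way NOF lower bound.

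For the two-round upper bound, I would exhibit an explicit protocol. Recall the players: player $k+1$ holds $(x_1,\dots,x_k)$, and for $i \leq k$ player $i$ holds all inputs except $x_i$; the $(k+1)$-st coordinate $z \in [q]$ is held by everyone except player $k+1$. The target value is $\ind(z, \gip^{k}(x_1,\dots,x_k)) = z_{\,j}$ where $j$ is the integer given by the first $\log\log q$ bits of $s := \gip^{k}(x_1,\dots,x_k)$. In round one, player $k+1$ — who knows all of $x_1,\dots,x_k$ and hence can compute $s$, and therefore $j$ — announces $j$ using only $\log\log q$ bits. In round two, any player who knows $z$ (say player $1$) announces the bit $z_j$, costing a single bit. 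Total communication is $\log\log q + O(1) = O(\log\log q)$ over two rounds, as claimed. One should check this fits the NOF round structure: the natural reading is that "two-round" allows a pass in which player $k+1$ speaks and then a pass among the remaining players; the protocol above uses exactly that, so it is legitimate.

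The main thing to verify carefully — and the only place where a genuine argument is needed — is the one-round lower bound $\OCC(\ind) = \Omega(\log q)$ for the \emph{modified} index function, since the reduction to Theorem~\ref{main_theorem} is then immediate. Here one must confirm that reshaping the index problem so that the index is encoded in the first $\log\log q$ bits of a $[q]$-element, and the string lives in (a prefix of) a $[q]$-element, does not degrade the bound: the number of distinct "positions" is $2^{\log\log q} = \log q$, and for each subset of behaviors Alice's one-way message must distinguish $2^{\log q}$ possibilities restricted to those $\log q$ coordinates, yielding the $\Omega(\log q)$ bound by the usual VC-dimension / fooling-argument for one-way index lower bounds. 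I expect this step to be routine given the classical index lower bound, with the only subtlety being bookkeeping of which bits of $x$ and $y$ are actually used. Finally, since $q$ is prime and for constant $k$ we have $\log q = \Theta(\log N)$ (as $N = q^{r}$ with $r$ constant), the separation is $\Omega(\log q)$ vs.\ $O(\log\log q)$ as stated.
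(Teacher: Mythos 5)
Your proof is correct and matches the paper's intended argument. The one-way lower bound follows by plugging $\OCC(\ind)=\Theta(\log q)$ (which holds since all $q$ rows of $M(\ind)$ are distinct, as any $x\neq x'\in[q]$ differ at some coordinate $i\in[\log q]$ reachable via $i_y$, so Lemma~\ref{One_way} applies) into Theorem~\ref{main_theorem}, and the two-round upper bound is exactly the paper's implicit protocol in which player $k+1$ computes $\gip(x_1,\dots,x_k)$ and announces the $\log\log q$-bit index, after which any other player outputs the corresponding bit of $z$.
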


\paragraph{Disjointness lower bounds.} 
Beyond the lifting theorem, we also apply our technique to obtain a new proof that the deterministic one-way three-party NOF communication cost of set disjointness is at least $\Omega(n)$.

\begin{theorem}\label{DISJ}
The deterministic one-way three-party NOF complexity of set disjointness is $\Omega(n)$.
\end{theorem}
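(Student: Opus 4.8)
The plan is to run the strategy behind Theorem~\ref{main_theorem} with the \emph{intersection} gadget in place of $\gip$. Write the three sets as $A,B,C\subseteq[n]$, with the players arranged so that the $i$-th player is the one who does not see the $i$-th set. Then three‑party NOF set disjointness is literally the lifted problem $f\circ g^{2}$, where $g^{2}(A,B)=A\cap B$ is coordinate‑wise intersection and $f(C,T)=1$ iff $C\cap T=\emptyset$ is two‑party set disjointness $\mathrm{DISJ}_{2}$ on $[2^{n}]\times[2^{n}]$; here the $(k{+}1)$-st player holds $s=A\cap B$ and the first two hold $z=C$. Since $\OCC(\mathrm{DISJ}_{2})=\Theta(n)$ (e.g.\ via the fooling set $\{(S,\overline S)\}_{S\subseteq[n]}$), an exact analogue of Theorem~\ref{main_theorem} for this gadget would immediately give $\NCC(\text{three-party }\mathrm{DISJ})=\Omega(n)$. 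So the first task is to isolate the precise property of the gadget that the lifting argument uses and to check that $\cap$ has it: for $\gip$ the key feature is that fixing all but one argument leaves a surjection onto $\mathbb{F}_{q}$, and although $\cap$ fails this for an arbitrary fixing, it does hold once one of the two sets is taken to be $[n]$, since then $A\cap B=B$ ranges over all of $2^{[n]}$. Three players is exactly the regime where a single free coordinate is enough, which is the reason the statement is phrased for $k=3$.

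Concretely, I would proceed as in the proof of Theorem~\ref{main_theorem}. First, note that a cost‑$c$ one‑way three‑party NOF protocol partitions $(\{0,1\}^{n})^{3}$ into at most $2^{c+1}$ pieces, each of which is a \emph{cylinder intersection} monochromatic for $\mathrm{DISJ}$: the first message is independent of $A$ (a direction‑$A$ cylinder), the second is independent of $B$ once the first is fixed (direction‑$B$), and the last player's output is a function of $(A,B)$ and the transcript (direction‑$C$); monochromaticity is forced by correctness. Hence $\mathrm{DISJ}^{-1}(1)$ is covered by at most $2^{c}$ many $1$‑monochromatic cylinder intersections, and it suffices to show that any such cover has size $2^{\Omega(n)}$. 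This is where the gadget analysis enters: a per‑coordinate argument shows that the largest $1$‑monochromatic cylinder intersection inside $\mathrm{DISJ}^{-1}(1)$ is exponentially smaller than $\mathrm{DISJ}^{-1}(1)$ itself — the seed being that on a single coordinate $\mathrm{NAND}$ has seven satisfying assignments but the largest cylinder intersection contained in them has only six — and sub‑multiplicativity across the $n$ coordinates then yields a bound of the form $(8-\Omega(1))^{n}$ on the size of each piece, so that $2^{\Omega(n)}$ pieces are needed and $c=\Omega(n)$.

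I expect the main obstacle to be exactly this combinatorial lower bound on the cylinder‑intersection cover number of $\mathrm{DISJ}$. In the two‑party world transcript classes are combinatorial rectangles, and a fooling family of \emph{pairs} of one‑inputs suffices because a rectangle containing two one‑inputs also contains the ``mixed'' inputs obtained by swapping coordinates; cylinder intersections enjoy no such closure, so the naive transfer of the $\{(S,\overline S)\}$ fooling set through the gadget fails — knowing that two disjoint triples lie in a common cylinder intersection places no third triple there. The argument therefore has to be genuinely multiparty, either via a careful ``cube''‑type combinatorial bound as sketched above, or, more robustly, via a discrepancy/measure argument exhibiting a distribution on $\mathrm{DISJ}^{-1}(1)$ under which every $1$‑monochromatic cylinder intersection has mass $2^{-\Omega(n)}$ — the three‑party counterpart of the BNS‑style discrepancy bound for $\gip$ that underlies the $\Omega(n/2^{k})$ bonus result of this paper. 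A secondary technical point is the bookkeeping that ties the cover‑number bound to the ``one free coordinate suffices'' feature of $\cap$ used when isolating the gadget property; since that feature is special to three players, the method as stated does not extend to more parties without a stronger pseudorandomness property of the gadget, although a weaker (still $2^{\Omega(n)}$ for any constant number of players) bound should survive.
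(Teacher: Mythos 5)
Your proposal takes a genuinely different route from the paper, and as written it has a substantial gap at exactly the point you flag. You reduce the task to bounding the \emph{cylinder-intersection} cover number of $\mathrm{DISJ}_3^{-1}(1)$, and you propose to prove the needed size bound by a per-coordinate ``sub-multiplicativity'' argument seeded by the observation that $\mathrm{NAND}$ on one coordinate has seven satisfying triples but the largest cylinder intersection inside them has six. This sub-multiplicativity is precisely what fails for cylinder intersections: unlike a combinatorial rectangle, a cylinder intersection $S_1\cap S_2\cap S_3$ is \emph{not} a product of per-coordinate constraints, and there is no general mechanism to propagate a one-coordinate deficit to an $n$-coordinate bound of the form $c^n$ with $c<7$. (In fact, proving strong monochromatic-cylinder-intersection bounds for $\mathrm{DISJ}_3$ is the hard core of the problem; the discrepancy route you mention only yields bounds of order $2^{-\Theta(\sqrt{n})}$ on the measure of a cylinder intersection, which recovers $\Omega(\sqrt{n})$ but not $\Omega(n)$.) Also, as a minor point, the stated bound $(8-\Omega(1))^n$ only helps if the implicit constant exceeds $1$, since $|\mathrm{DISJ}_3^{-1}(1)|=7^n$.

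The paper's proof sidesteps the cylinder-intersection issue entirely by exploiting the \emph{one-way} structure to get back to rectangles. After fixing only the first two messages $(\pi^*_A,\pi^*_B)$ (not the output) and additionally fixing the third player's input $z$, the set of consistent pairs $(x,y)$ is a genuine \emph{rectangle}: $\{y:\Pi_A(y,z)=\pi^*_A\}\times\{x:\Pi_B(x,z,\pi^*_A)=\pi^*_B\}$. The argument then (i) finds, via an averaging/graph-counting lemma, two inputs $z_0^*,z_1^*$ of large Hamming distance whose associated rectangles have a large common intersection $R$ with the disjoint set $D_0$, and (ii) invokes a rectangle pseudorandomness lemma (Yang--Zhang density increment): if $|R\cap D_0|\ge |D_0|\cdot 2^{-c}$ then $R$ meets $D_\ell$ for all but $2c$ coordinates $\ell$. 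Picking a coordinate where $z_0^*$ and $z_1^*$ differ and $R\cap D_\ell\ne\emptyset$ produces the fooling pair. Your framing of $\mathrm{DISJ}_3$ as $f\circ g^2$ with $g=\cap$ is a fine start, and the three-player-specific observation you make (fixing one free argument controls the gadget) is morally the same as ``fixing $z$ yields a rectangle,'' but you stop short of converting this into the rectangle reduction; instead you revert to a direct cylinder-intersection argument, which is precisely what one should avoid. If you want to complete your line of attack you would need either the rectangle reduction above or an actual proof of a $2^{-\Omega(n)}$ measure bound on monochromatic cylinder intersections, and neither is supplied.
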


To the best of our knowledge, the only known proof of Theorem~\ref{DISJ} relies on the discrepancy method~\cite{rao2015simplified}. However, the discrepancy method can only establish an \(\Omega(\sqrt{n})\) lower bound in the randomized \textit{one-way NOF} protocols for set disjointness. In contrast, our approach is based on lifting, which has the potential for proving $\Omega(n)$ lower bounds in the randomized setting.

\section{Preliminaries}
\paragraph{Communication complexity.}
We begin by recalling some standard definitions in communication complexity. In the two-party communication model, Alice and Bob receive inputs \( x \in X \) and \( y \in Y \), respectively. Their goal is to compute a function \( f : X \times Y \rightarrow \{0,1\} \). For any two-party function \( f \), we also use \( M(f) \) to denote the \textit{communication matrix} corresponding to \( f \); that is, \( M(f) \) is an \( X \times Y \) matrix where each entry at position \( (x, y) \) is \( f(x, y) \).

\begin{definition}[One-way communication complexity]
 Alice sends a single message \( \Pi(x) \) to Bob, and Bob outputs \( f(x, y) \) based on \( y \) and the received message. The deterministic communication complexity is the maximum length of the message \( |\Pi(x)| \) over all possible inputs, denoted by \( \mathrm{OCC}(f) \).
\end{definition}

We use the following lemma by Feder, Kushilevitz, Naor, and Nisan \cite{feder1995amortized} to characterize the one-way deterministic communication complexity of any two-party function.

\begin{lemma}\label{One_way} 
Let $f : [q] \times [q] \rightarrow \{0,1\}$. Then $f$ has one-way deterministic communication complexity $C$ if and only if $M(f)$ contains a set of $2^C$ distinct rows. That is, there exists a subset $Z \subseteq [q]$ of size $|Z| = 2^C$ such that for any distinct $z_0, z_1 \in Z$, there exists $v \in [q]$ with $f(z_0, v) \neq f(z_1, v)$.
\end{lemma}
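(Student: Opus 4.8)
The plan is to prove both directions of the equivalence by directly matching Alice's possible messages against the distinct rows of the communication matrix $M(f)$, where each input $z \in [q]$ is identified with the row vector $\bigl(f(z,v)\bigr)_{v \in [q]}$.

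\emph{Lower-bound direction: many distinct rows force a long message.} Suppose $M(f)$ contains $2^C$ pairwise-distinct rows, witnessed by a set $Z = \{z_1,\dots,z_{2^C}\} \subseteq [q]$ such that for every $z_a \neq z_b$ there is a column $v$ with $f(z_a,v) \neq f(z_b,v)$. Fix an arbitrary one-way deterministic protocol in which Alice sends $\Pi(x)$ and Bob outputs from $(\Pi(x),y)$. The key observation is that $\Pi$ must be injective on $Z$: if $\Pi(z_a) = \Pi(z_b)$ for some $a \neq b$, then Bob's output on $(z_a,v)$ and $(z_b,v)$ coincide for \emph{every} column $v$, forcing $f(z_a,v) = f(z_b,v)$ for all $v$ and contradicting that $z_a,z_b$ index distinct rows. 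Hence $\Pi$ takes at least $2^C$ distinct values, so some message has length $\geq C$, giving $\mathrm{OCC}(f) \geq C$.

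\emph{Upper-bound direction: few distinct rows give a cheap protocol.} Conversely, suppose $M(f)$ has at most $2^C$ distinct rows. Partition $[q]$ into equivalence classes under $z \sim z'$ iff $f(z,v) = f(z',v)$ for all $v$, i.e.\ by equality of rows; there are at most $2^C$ classes, so fix an injective assignment of $C$-bit names to them. Alice, on input $z$, sends the name of her class. Since the name determines the entire row $f(z,\cdot)$ as a function, Bob reads off $f(z,v)$ from his own input $v$. This is a correct one-way protocol of cost $C$, so $\mathrm{OCC}(f) \leq C$. Combining the two bounds shows $\mathrm{OCC}(f) = \lceil \log_2 t \rceil$, where $t$ is the number of distinct rows of $M(f)$, from which the stated "$2^C$ distinct rows" characterization follows.

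\emph{Main obstacle.} There is essentially no technical difficulty here: the whole content is the remark that a one-way message must be injective on any set of inputs with pairwise-distinct rows, and the rest is a counting argument. The only point needing a little care is rounding — a complexity of exactly $C$ corresponds to strictly more than $2^{C-1}$ and at most $2^C$ distinct rows, not exactly $2^C$ unless $t$ is a power of two — so I would state and invoke the lemma in its robust lower-bound form, "$2^C$ distinct rows $\Rightarrow \mathrm{OCC}(f) \geq C$", which is exactly what the later applications use.
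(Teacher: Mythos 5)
Your proof is correct, and it is the standard argument for this folklore characterization. The paper does not give its own proof of this lemma — it is cited to Feder, Kushilevitz, Naor, and Nisan — so there is no paper proof to compare against; your two-direction argument (injectivity of Alice's message on any set of pairwise-distinct rows, and conversely sending the name of the row-equivalence class) is exactly the intended one. Your closing remark about the rounding issue is also well-taken: the lemma as stated says "$2^C$ distinct rows" where the precise statement is $\mathrm{OCC}(f) = \lceil \log_2 t \rceil$ for $t$ the number of distinct rows, and indeed the only direction the paper uses downstream is the robust lower-bound form "$2^C$ distinct rows $\Rightarrow \mathrm{OCC}(f) \geq C$," which is exactly what your first paragraph establishes.
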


\begin{definition}[One-way NOF]
In the $k$-party one-way NOF, $k$ players collaborate to compute a function \( f : X_1 \times \cdots \times X_k \rightarrow \{0,1\} \). The inputs are distributed such that each player $i$ knows all inputs except for $x_i$.

In the one-way communication setting, the players communicate in a fixed order, from the first player to the last. Each player sends a single message, and the last player outputs the value of \( f(x_1, \dots, x_k) \).
\end{definition}

The deterministic communication complexity is defined as the maximum total length of all messages over all possible inputs, and is denoted by \( \mathrm{OCC}^{\mathrm{NOF}}(f) \). The notion of cylinder intersections plays a central role in studying the communication complexity of NOF problems.

\begin{definition}
A set \( S \subseteq X_1 \times \cdots \times X_k \) is called a cylinder if there exists an index \( i \in [k] \) such that membership in \( S \) does not depend on the value of \( x_i \). A set \( S \) is called a cylinder intersection if it can be written as \( S = S_1 \cap \cdots \cap S_k \), where each \( S_i \) is a cylinder.
\end{definition}

\section{A Lifting Theorem for One-Way NOF Model}\label{sec_lifting}
We first recall the definition of $\gip_{q,r}^{k}:(\mathbb{F}_q^{r})^{k}\rightarrow\mathbb{F}_q$ over a large field by,
\[
\gip_{q,r}^{k}(x_1, \dots, x_k) = \sum_{j \in [r]} \prod_{i \in [k]} x_{i,j}.
\]
To simplify notations, we also use $[N]$ to denote $\mathbb{F}_{q}^{r}$. Our main theorem aims to show that
\[
\mathrm{OCC}^{\mathrm{NOF}}(f \circ \gip_{q,r}^{k}) = \Theta(\mathrm{OCC}(f)),
\]
for any two-party communication problem \( f : [q] \times [q] \rightarrow \{0,1\} \) and \( r \geq 2^{k+1} \). The upper bound, i.e., \( \mathrm{OCC}^{\mathrm{NOF}}(f \circ \gip_{q,r}^{k}) = O(\mathrm{OCC}(f)) \), is straightforward. Hence, we focus on the lower bound.

\begin{proof}[Proof of Theorem \ref{main_theorem}]
For any two-party function $f$, by Lemma \ref{One_way}, let $Z \subseteq [q]$ be the set of the distinct rows of $M(f)$ of size $|Z| = 2^{\OCC(f)}$. Our goal is to prove that
\[
\NCC(f \circ \gip) = \Omega\left(\log |Z|\right).
\]
We omit the subscripts of $q, r,k$ here as they are fixed throughout the proof.
Let $\Pi$ be any one-way NOF protocol. We show that if the communication complexity of $\Pi$ is $o(\log |Z|)$, then there exist messages $(\pi_1,\dots,\pi_{k})$, as well as distinct rows $z_0^{*}, z_{1}^{*} \in Z$ and inputs $(x_1^*,\dots, x_k^*) \in [N]^{k}$, such that
\begin{itemize}
    \item The first $k$ players output $(\pi_1,\dots,\pi_{k})$ for both inputs $(x_1^*,\dots, x_k^*,z_0^{*})$ and $(x_1^*,\dots, x_k^*,z_1^{*})$.
    \item $(f \circ \gip)(x_1^*, \dots, x_k^*, z_0^*) \neq (f \circ \gip)(x_1^*, \dots, x_k^*, z_1^*)$.
\end{itemize}
The items above imply that $\Pi$ cannot distinguish inputs $(x_1^*,\dots, x_k^*,z_0^{*})$ and $(x_1^*,\dots, x_k^*,z_1^{*})$. Therefore, it is not a deterministic protocol for computing $f \circ \gip$.

Let \( \Pi \) be any one-way protocol with communication bits less than \( (\log |Z|)/3 \). By the pigeonhole principle, there exists a message tuple \( (\pi_1, \dots, \pi_k) \) such that the following set \( A \) has size at least
\[
|A| \geq \frac{N^k \cdot |Z|}{|Z|^{1/3}} = N^k \cdot |Z|^{2/3} \geq 2N^k \cdot \sqrt{|Z|}.
\]
Here, the set \( A \) is defined as
\[
A = \left\{(x_1,\dots,x_k, z) \in [N]^{k} \times Z \; : \; \text{the first $k$ players output } (\pi_1, \dots, \pi_k) \text{ on input } (x_1,\dots,x_k, z) \right\}.
\]

Now we focus on finding indistinguishable pairs $(x^*, y^*, z_0^*)$ and $(x^*, y^*, z_1^*)$ from the set $A$. The following largeness lemma is a crucial component of our proof.

\begin{lemma}\label{lemma:large_intersection}
Let $A \subseteq [N]^{k} \times Z$ be a set of size $|A| \geq 2N^{k}\cdot \sqrt{|Z|}$. Then for uniformly sampled distinct $z, z' \in Z$, we have that,
\[
\mathbb{E}_{z, z'}\left[ \left| \{(x_1,\dots,x_k) : (x_1,\dots,x_k, z) \in A\} \cap \{(x_1,\dots,x_k) : (x_1,\dots,x_k, z') \in A\} \right| \right] \geq \frac{N^k}{|Z|} \geq \frac{q^{kr}}{q} = q^{kr-1}.
\]    
\end{lemma}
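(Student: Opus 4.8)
The plan is to compute the expectation by switching the order of summation. For each tuple $(x_1,\dots,x_k) \in [N]^k$, let $S_{(x_1,\dots,x_k)} = \{z \in Z : (x_1,\dots,x_k,z) \in A\}$ be the "fiber" of $A$ above that tuple. The quantity inside the expectation counts, for a fixed pair $z,z'$, the number of tuples $(x_1,\dots,x_k)$ whose fiber contains both $z$ and $z'$. So the sum over all ordered pairs $(z,z')$ of this quantity equals $\sum_{(x_1,\dots,x_k)} |S_{(x_1,\dots,x_k)}| \cdot (|S_{(x_1,\dots,x_k)}| - 1)$ — each tuple contributes a pair $(z,z')$ of distinct elements from its own fiber. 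Since $\sum_{(x_1,\dots,x_k)} |S_{(x_1,\dots,x_k)}| = |A|$, the number of tuples is $N^k$, and the number of ordered distinct pairs $(z,z')$ is $|Z|(|Z|-1)$, Cauchy–Schwarz (or convexity of $t \mapsto t(t-1)$) gives

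\begin{align*}
\mathbb{E}_{z,z'}\left[ \left|\{(x_1,\dots,x_k) : z,z' \in S_{(x_1,\dots,x_k)}\}\right| \right]
&= \frac{1}{|Z|(|Z|-1)} \sum_{(x_1,\dots,x_k)} |S_{(x_1,\dots,x_k)}|\bigl(|S_{(x_1,\dots,x_k)}| - 1\bigr) \\
&\geq \frac{1}{|Z|(|Z|-1)} \cdot N^k \cdot \frac{|A|}{N^k}\left(\frac{|A|}{N^k} - 1\right),
\end{align*}

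using convexity to lower-bound the sum of $|S|(|S|-1)$ by $N^k$ times the value at the average fiber size $|A|/N^k$.

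It remains to check that the resulting bound is at least $N^k/|Z|$. Plugging in the hypothesis $|A| \geq 2N^k\sqrt{|Z|}$, the average fiber size is $a := |A|/N^k \geq 2\sqrt{|Z|}$, so $a(a-1) \geq a^2/2 \geq 2|Z|$ (the first inequality holds once $a \geq 2$, which is safe since $|Z| \geq 1$). Hence the right-hand side is at least $\frac{1}{|Z|(|Z|-1)} \cdot N^k \cdot 2|Z| \geq \frac{2N^k}{|Z|-1} \geq \frac{N^k}{|Z|}$. Finally, $N = q^r$ gives $N^k/|Z| = q^{kr}/|Z| \geq q^{kr}/q = q^{kr-1}$ since $|Z| = 2^{\OCC(f)} \leq q$.

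I do not anticipate a serious obstacle here; the only thing to be slightly careful about is the bookkeeping between ordered versus unordered pairs and the "distinct" constraint on $z,z'$ (which is why the $t(t-1)$ rather than $t^2$ appears), and making sure the crude bound $a(a-1) \geq a^2/2$ is applied only when $a \geq 2$. The convexity step is the mild "heart" of the argument, but it is entirely standard.
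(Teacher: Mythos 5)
Your proof is correct and follows essentially the same route as the paper: both switch the order of summation to count pairs via the fibers/degrees over $[N]^k$, and both then apply a convexity argument (the paper phrases it as $\binom{t}{2}\ge\tfrac12(t-1)^2$ followed by Cauchy--Schwarz on $\sum_r(\deg(r)-1)^2$, which amounts to the same thing as your direct use of convexity of $t\mapsto t(t-1)$, and your version even gives a slightly tighter intermediate bound $|E|(|E|-|R|)/|R|$ versus $(|E|-|R|)^2/|R|$).
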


\noindent We postpone to proof of Lemma \ref{lemma:large_intersection} to Section \ref{sec: large_intersection}. Now we fix any distinct $z, z'$ such that
\[
\left| \{(x_1,\dots,x_k) : (x_1,\dots,x_k, z) \in A\} \cap \{(x_1,\dots,x_k) : (x_1,\dots,x_k, z') \in A\} \right|\geq q^{kr-1}\]
as our desired $z_0^*$ and $z_1^*$. Since $z_0^*$ and $z_1^*$ are distinct rows of $Z$, there exists a $v \in [q]$ such that $f(z_0^*, v) \neq f(z_1^*, v)$. Our goal now reduces to finding a tuple $(x_1^*,\dots,x_k^*)$ in the intersection, i.e., 
\[
(x_1^*,\dots, x_k^*, z_0^*), (x_1^*,\dots, x_k^*, z_1^*) \in A
\]
such that $\gip(x_1^*,\dots, x_k^*) = v$.

To prove that, a crucial fact is that for any fixed $z$, the set $\{(x_1,\dots,x_k) : (x_1, \dots,x_k, z) \in A\}$ forms a cylinder intersection. Therefore, the intersection 
\[
\{(x_1,\dots,x_k) : (x_1,\dots,x_k, z_0^*) \in A\} \cap \{(x_1,\dots,x_k) : (x_1,\dots,x_k, z_1^*) \in A\}
\]
is also a cylinder intersection of size at least $q^{kr-1}$. We now invoke the following disperser property of the generalized inner product function over cylinder intersection.

\begin{lemma}
\label{disperser}
For $r\geq 2^{k+1}.$ Let $S\subseteq (\mathbb{F}_{q}^{r})^{k}$ be any cylinder intersection of size $|S|\geq q^{r\cdot k-1}$. Then for every $v\in\mathbb{F}_{q}$,  we have that
\[
\Pr_{(x_1,\dots,x_k) \in S}\left[\gip(x_1,\dots,x_k) = v\right] \geq\frac{1}{q} - q\cdot (2k/q)^{4}
\]
\end{lemma}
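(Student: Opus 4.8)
The plan is to bound the deviation of $\gip$ from uniform on a cylinder intersection $S$ via a Fourier/character-sum argument over $\mathbb{F}_q$, exactly in the spirit of the Babai--Nisan--Szegedy discrepancy bound but adapted to the large-field gadget. First I would write the probability as an expectation of the indicator $\mathbf{1}[\gip(x)=v]$, expand this indicator using additive characters of $\mathbb{F}_q$ — namely $\mathbf{1}[a=v] = \frac{1}{q}\sum_{\chi}\chi(a-v)$ — so that
\[
\Pr_{x\in S}\left[\gip(x)=v\right] = \frac{1}{q} + \frac{1}{q}\sum_{\chi\neq \chi_0} \chi(-v)\cdot \mathbb{E}_{x\in S}\left[\chi(\gip(x))\right].
\]
Thus it suffices to show $\bigl|\mathbb{E}_{x\in S}[\chi(\gip(x))]\bigr|$ is small for every nontrivial character $\chi$; since there are $q-1$ of them, if each is at most $(2k/q)^4$ we get the claimed bound $\frac1q - q\cdot(2k/q)^4$ after accounting for the $|S| \ge q^{rk-1}$ normalization (the density loss contributes at most a factor $q$).

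The core step is the cube/BNS argument for the character sum. Writing $S = S_1\cap\cdots\cap S_k$ with $S_i$ independent of $x_i$, and letting $\sigma(x) = \chi(\gip(x))\cdot \prod_i \mathbf{1}[x\in S_i]$ (so that $\mathbb{E}_{x\in S}[\chi(\gip(x))] = \frac{q^{rk}}{|S|}\,\mathbb{E}_x[\sigma(x)]$ where the outer expectation is uniform over all of $(\mathbb{F}_q^r)^k$), I would apply the Gowers--Cauchy--Schwarz (BNS) inequality $k$ times, once per coordinate $i$, each time using that $S_i$ does not depend on $x_i$ to absorb the cylinder indicator. This yields
\[
\bigl|\mathbb{E}_x[\sigma(x)]\bigr|^{2^k} \le \mathbb{E}\left[\prod_{\epsilon\in\{0,1\}^k} \chi\!\left(\gip\bigl(x_1^{\epsilon_1},\dots,x_k^{\epsilon_k}\bigr)\right)\right],
\]
where the $x_i^0,x_i^1$ are independent copies. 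Because $\gip$ is multilinear, the alternating sum $\sum_{\epsilon}(-1)^{|\epsilon|}\gip(x^\epsilon)$ telescopes to $r$ times the single product $\prod_i (x_i^{1,j}-x_i^{0,j})$ summed over $j\in[r]$, so the right-hand side factors over the $r$ coordinates into a product of $r$ identical one-dimensional character sums, each of the form $\mathbb{E}_{a_1,\dots,a_k\in\mathbb{F}_q}\bigl[\chi(\prod_i a_i)\bigr]$ up to sign. A direct computation shows this one-dimensional sum has absolute value exactly $(1-1/q)^{?}$-type expression — more precisely it equals something like $((q-1)/q)^{\text{(number of nonzero choices)}}$ behavior — and in any case is bounded by a constant strictly below $1$; raising to the $r$-th power and using $r \ge 2^{k+1}$ gives a bound of the form $2^{-\Omega(r)}$ on $|\mathbb{E}_x[\sigma(x)]|^{2^k}$, hence $|\mathbb{E}_{x\in S}[\chi(\gip(x))]|$ is at most roughly $q\cdot 2^{-\Omega(r/2^k)} \le (2k/q)^4$ when $r \ge 2^{k+1}$.

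The main obstacle I anticipate is tracking the constants carefully enough to land the clean $(2k/q)^4$ bound: the density loss $q^{rk}/|S| \le q$ enters before the $2^k$-th root, so it becomes $q^{1/2^k}$, and one must check that the exponential gain from $r \ge 2^{k+1}$ in the character sum genuinely dominates this and the $q-1$ union bound over characters. The key quantitative input is that the one-dimensional sum $\mathbb{E}_{a\in\mathbb{F}_q}[\chi(a\cdot c)]$ for fixed $c\ne 0$ vanishes, while the product structure $\prod_{i}a_i$ means the sum over all $a_i$ is dominated by the all-nonzero case, giving a per-coordinate contribution bounded away from $1$ by roughly $\Theta(k/q)$; the exponent $r$ then needs to be at least on the order of $2^k\log q$ for the bound to be nontrivial, and $r \ge 2^{k+1}$ combined with the regime $q$ large (which is where the lemma is applied) suffices. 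A secondary technical point is justifying that every intermediate object remains a genuine cylinder-intersection-weighted expectation so that the BNS step legitimately applies coordinate by coordinate; this is routine but must be stated precisely.
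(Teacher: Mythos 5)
Your high-level route is the same as the paper's: expand the indicator $\mathbf{1}[\gip(x)=v]$ via additive characters of $\mathbb{F}_q$, reduce to bounding $|\mathbb{E}_{x\in S}[\chi(\gip(x))]|$ for each nontrivial $\chi$, and control that quantity with a BNS/Gowers--Cauchy--Schwarz cube argument using the cylinder structure to absorb the $S_i$-indicators. However, there are two concrete quantitative gaps.

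First, the per-coordinate estimate is too weak. After the cube argument and the factorization over the $r$ coordinates, you need to bound the absolute value of $\mathbb{E}_{a_1,\dots,a_k\in\mathbb{F}_q}[\chi(\prod_i a_i)]$. You write that this is ``bounded by a constant strictly below $1$'' (a $(1-1/q)^{\cdot}$-type quantity). In fact it is much smaller: conditioning on $c=\prod_{i\ge 2}a_i$, the inner average $\mathbb{E}_{a_1}[\chi(a_1 c)]$ is $0$ whenever $c\ne 0$ --- a fact you yourself state --- so the whole expectation is supported on the event $\{\prod_i a_i=0\}$ and its absolute value is at most $\Pr[\prod_i a_i = 0]\le k/q$. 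This $O(k/q)$ decay, rather than a fixed constant below $1$, is exactly what makes the hypothesis $r\ge 2^{k+1}$ sufficient; with only a constant per-coordinate bound one would need $r\gtrsim 2^k\log q$, as you yourself observe mid-proposal, and $r\ge 2^{k+1}$ does not imply this for large $q$. Consequently your final step ``$q\cdot 2^{-\Omega(r/2^k)}\le(2k/q)^4$ when $r\ge 2^{k+1}$'' does not hold: with $r=2^{k+1}$ the left-hand side is $\Theta(q)$, not $(2k/q)^4$.

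Second, even after replacing the per-coordinate bound with the correct $O(k/q)$, applying Cauchy--Schwarz $k$ times (one per player) is one step too many. You would get $|\mathbb{E}_x[\sigma(x)]|^{2^k}\le(O(k/q))^r$, hence $|\mathbb{E}_x[\sigma(x)]|\le(O(k/q))^{r/2^k}$; with $r=2^{k+1}$ the exponent is only $2$, and after multiplying by the density factor $q^{rk}/|S|\le q$ the error is of order $q\cdot(k/q)^2 = k^2/q$, which is larger than the target $q(2k/q)^4=16k^4/q^3$ whenever $q\gg k$. The paper instead applies Cauchy--Schwarz only $k-1$ times (bounding the leftover cylinder indicators by $1$) and then directly averages over the single untouched variable $x_1$; that inner average vanishes unless every coefficient $\prod_{i>1}x_{i,j}x_{i,j}'$ is zero, giving $\gamma_\alpha^{2^{k-1}}\le(2k/q)^r$ and hence $\gamma_\alpha\le(2k/q)^{r/2^{k-1}}\le(2k/q)^4$. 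That extra factor of two in the exponent is precisely what lands the claimed $q(2k/q)^4$ error term.
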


\noindent We postpone the proof to Section \ref{sec: disperser}. Assuming Lemma~\ref{disperser}, we can choose the desired $(x_1^*,\dots,x_k^{*})$ such that \(\gip(x_1^*, \dots, x_k^*) = v\). We then conclude the proof.

\end{proof}

\subsection{Proof of Lemma \ref{lemma:large_intersection}}

\label{sec: large_intersection}

Recall that $A \subseteq [N]^k \times Z$ is a set of size $|A| \geq 2N^2\cdot \sqrt{|Z|}$, we aim to prove that
\[
\mathbb{E}_{z, z'}\left[ \left| \{(x_1,\dots,x_k) : (x_1,\dots,x_k, z) \in A\} \cap \{(x_1,\dots,x_k) : (x_1,\dots,x_k, z') \in A\} \right| \right] \geq \frac{N^k}{|Z|}.
\]    

We consider a bipartite graph $G=(L\cup R, E)$ defined by $L=Z$, $R=[N]^k$ and $E=\{(z,(x_1,\dots,x_k)): (x_1,\dots,x_k,z)\in A\}$, and prove the following lemma.
\begin{lemma}\label{graph}
Let \( G = (L\cup R,E) \) be a bipartite graph with $|E|\geq 2\cdot \sqrt{|L|}\cdot |R|$. Then for uniformly chosen distinct $\ell, \ell'\in L$, we have that 
\[
\mathbb{E}\left[ |N(\ell) \cap N(\ell')| \right] \geq \frac{|R|}{|L|}.
\]
where $N(\ell)\subseteq R$ denote the neighborhoods of $\ell$ in $R$.
\end{lemma}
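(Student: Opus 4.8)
\textbf{Proof plan for Lemma~\ref{graph}.}
The plan is to write the quantity $\mathbb{E}_{\ell,\ell'}[|N(\ell)\cap N(\ell')|]$ as a sum over vertices $v\in R$ and apply a convexity argument to each term. Concretely, for a uniformly chosen unordered pair of distinct $\ell,\ell'\in L$, we have
\[
\mathbb{E}_{\ell,\ell'}\bigl[|N(\ell)\cap N(\ell')|\bigr]
= \sum_{v\in R} \Pr_{\ell,\ell'}\bigl[v\in N(\ell)\text{ and }v\in N(\ell')\bigr]
= \sum_{v\in R} \frac{\binom{d_v}{2}}{\binom{|L|}{2}},
\]
where $d_v=|N(v)|$ denotes the degree of $v\in R$ into $L$. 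The point is that $\sum_{v\in R} d_v = |E|$, so the whole expression is controlled by a single convexity (Cauchy--Schwarz / Jensen) step.

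First I would lower bound each $\binom{d_v}{2}$. Using $\binom{d_v}{2}\geq \frac{1}{2}(d_v-1)^2 \geq \frac12 d_v^2 - d_v$ (or more simply $\binom{d_v}{2} \geq \frac{d_v^2}{2} - \frac{d_v}{2}$), and $\binom{|L|}{2}\leq \frac{|L|^2}{2}$, we get
\[
\mathbb{E}_{\ell,\ell'}\bigl[|N(\ell)\cap N(\ell')|\bigr]
\geq \frac{1}{|L|^2}\sum_{v\in R}\bigl(d_v^2 - d_v\bigr)
= \frac{1}{|L|^2}\Bigl(\sum_{v\in R} d_v^2 - |E|\Bigr).
\]
Then I would apply Cauchy--Schwarz in the form $\sum_{v\in R} d_v^2 \geq \frac{1}{|R|}\bigl(\sum_{v\in R} d_v\bigr)^2 = \frac{|E|^2}{|R|}$, so the right-hand side is at least $\frac{1}{|L|^2}\bigl(\frac{|E|^2}{|R|} - |E|\bigr)$.

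Finally I would plug in the hypothesis $|E|\geq 2\sqrt{|L|}\cdot|R|$. This gives $\frac{|E|^2}{|R|} \geq 4|L|\cdot|R|$ and also $\frac{|E|^2}{|R|} - |E| \geq \frac{|E|^2}{2|R|}$ provided $|E|\geq 2|R|$, which holds since $|E| \geq 2\sqrt{|L|}|R| \geq 2|R|$ (assuming $|L|\geq 1$). Hence
\[
\mathbb{E}_{\ell,\ell'}\bigl[|N(\ell)\cap N(\ell')|\bigr]
\geq \frac{1}{|L|^2}\cdot\frac{|E|^2}{2|R|}
\geq \frac{1}{|L|^2}\cdot\frac{4|L|\cdot|R|^2}{2|R|}
= \frac{2|R|}{|L|}
\geq \frac{|R|}{|L|},
\]
as desired. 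I do not expect any serious obstacle here; the only thing to be slightly careful about is handling the $-d_v$ lower-order terms and the $\binom{|L|}{2}$ versus $|L|^2/2$ discrepancy cleanly, but the constant $2$ in the hypothesis $|E|\geq 2\sqrt{|L|}|R|$ gives more than enough slack to absorb all of these. The reduction from Lemma~\ref{lemma:large_intersection} to Lemma~\ref{graph} is then immediate by taking $L=Z$, $R=[N]^k$, and noting $N(\ell)\cap N(\ell')$ is exactly the claimed intersection of slices of $A$.
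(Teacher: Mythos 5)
Your proposal is correct and follows essentially the same route as the paper: rewrite the expectation as $\sum_{v\in R}\binom{d_v}{2}/\binom{|L|}{2}$, then apply Cauchy--Schwarz to lower bound $\sum_v d_v^2$ (equivalently $\sum_v\binom{d_v}{2}$) in terms of $|E|$, and finally substitute the hypothesis $|E|\geq 2\sqrt{|L|}\,|R|$. The bookkeeping of lower-order terms is slightly different (the paper bounds $\sum_r\binom{\deg(r)}{2}\geq\frac{(|E|-|R|)^2}{2|R|}$ directly and finishes with $(2\sqrt{|L|}-1)^2\geq |L|-1$, while you split off the $-|E|$ term via the assumption $|E|\geq 2|R|$), and your version even yields the marginally stronger bound $2|R|/|L|$, but there is no conceptual difference.
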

\begin{proof}
Let \( \mathbf{1}(\ell, r) := \mathbf{1}\{(\ell, r) \in E\} \) denote the indicator function for whether the edge \( (\ell, r) \) exists in \( E \). Then we have
\[
\mathbb{E}[|N(\ell) \cap N(\ell')|] = \mathbb{E}\left[\sum_{r \in R} \mathbf{1}(\ell, r) \cdot \mathbf{1}(\ell', r)\right] = \sum_{r \in R} \mathbb{E}\left[\mathbf{1}(\ell, r) \cdot \mathbf{1}(\ell', r)\right].
\]
Let \( \deg(r) := \sum_{\ell \in L} \mathbf{1}(\ell, r) \) be the degree of vertex \( r \in R \). Then,
\[
\sum_{r \in R} \mathbb{E}\left[\mathbf{1}(\ell, r) \cdot \mathbf{1}(\ell', r)\right] 
= \frac{2}{|L|(|L| - 1)} \cdot \sum_{r \in R} \sum_{\substack{\ell, \ell' \in L \\ \ell \neq \ell'}} \mathbf{1}(\ell, r) \cdot \mathbf{1}(\ell', r) 
= \frac{2}{|L|(|L| - 1)} \cdot \sum_{r \in R} \binom{\deg(r)}{2}.
\]
Now observe that
\[
\sum_{r \in R} \binom{\deg(r)}{2} 
\geq \frac{1}{2}\cdot \sum_{r \in R} \left(\deg(r) - 1\right)^2
\geq \frac{1}{2|R|} \left( \sum_{r \in R} (\deg(r) - 1) \right)^2 
= \frac{(|E| - |R|)^2}{2|R|}.
\]
The second inequality follows from the Cauchy-Schwarz inequality; the equality uses the identity \( \sum_{r \in R} \deg(r) = |E| \). Therefore,
\[
\mathbb{E}[|N(\ell) \cap N(\ell')|] 
\geq \frac{2}{|L|(|L| - 1)} \cdot \frac{(|E| - |R|)^2}{2|R|} 
\geq \frac{2}{|L|(|L| - 1)} \cdot \frac{(2\sqrt{|L|} - 1)^2 \cdot |R|^2}{2|R|} 
= \frac{(2\sqrt{|L|} - 1)^2 \cdot |R|}{|L|(|L| - 1)} 
\]
where the second inequality uses the assumption that \( |E| \geq 2\sqrt{|L|} \cdot |R| \). Finally, we conclude the proof since \( (2\sqrt{|L|} - 1)^2 \geq |L| - 1 \).
\end{proof}

\subsection{Proof of Lemma~\ref{disperser}}

\label{sec: disperser}

We prove Lemma~\ref{disperser} in this section. Let $r \geq 2^{k+1}$ and $S\subseteq (\mathbb{F}_{q}^{r})^{k}$ be any cylinder intersection of size $|S|\geq q^{r\cdot k-1}$. We aim to prove that for every $v\in\mathbb{F}_{q}$, 
\[
\Pr_{(x_1,\dots,x_k) \in S}\left[\gip(x_1,\dots,x_k) = v\right] \geq\frac{1}{q} - q\cdot (2k/q)^{4}
\]

\begin{proof}
Let $\mathbf{1}:\mathbb{F}_{q}\rightarrow \{0,1\}$ be the indicator function, i.e., $\mathbf{1}(z)=1$ if $z=0$ and $\mathbf{1}(z)=0$ otherwise. We write its Fourier transform as 
\[
\mathbf{1}(z) = \frac{1}{q} \sum_{\alpha \in \mathbb{F}_q} \chi_{\alpha}(z),
\]
where $\chi_{\alpha}$ are the additive characters of $\mathbb{F}_q$. Then,
\begin{align*}
\Pr_{(x_1,\dots,x_k) \in S}\left[\gip(x_1,\dots,x_k) = v\right]
&= \frac{1}{|S|} \sum_{(x_1,\dots,x_k)\in S} \mathbf{1}\left(\gip(x_1,\dots,x_k) -v\right) \\
&= \frac{1}{|S|} \sum_{(x_1,\dots,x_k)\in S} \frac{1}{q}\cdot \sum_{\alpha \in \mathbb{F}_q} \chi_{\alpha}\left(\gip(x_1,\dots,x_k) - v\right) \\
&= \frac{1}{q} + \frac{1}{q\cdot|S|} \sum_{\alpha \neq 0} \chi_{\alpha}(-v) \sum_{(x_1,\dots,x_k)\in S} \chi_{\alpha}\left(\gip(x_1,\dots,x_k)\right).
\end{align*}
The last equality follows by the fact that $\chi_{\alpha}(0)=1$ and  $\chi_{\alpha}(a+b) = \chi_{\alpha}(a)\cdot \chi_{\alpha}(b)$. We first analyze the upper bound of 
\[
\left|\sum_{(x_1,\dots,x_k)\in S} \chi_{\alpha}\left(\gip(x_1,\dots,x_k)\right)\right|
\]
for every $\alpha$. Recall the definition of $\gip$, we rewrite it as below $D_{\alpha}$.
\[
D_{\alpha}:=\left|\sum_{(x_1,\dots,x_k) \in S}  \chi_{\alpha}\left( \sum_{j\in[r]} \prod_{i\in[k]} x_{i,j} \right) \right| 
= \left|\sum_{(x_1,\dots,x_k) \in (\mathbb{F}_{q}^{r})^{k}} \prod_{i} \mathbf{1}_{S_{i}}(x_1,\dots,x_{k})\cdot  \chi_{\alpha}\left( \sum_{j} \prod_{i} x_{i,j} \right) \right| 
\]
\noindent Let $\gamma_{\alpha}  = D_{\alpha}/q^{r\cdot k} =  \left|\mathbb{E}_{(x_1,\dots,x_k) } \left[\prod_{i} \mathbf{1}_{S_{i}}(x_1,\dots,x_{k})\cdot  \chi_{\alpha}\left( \sum_{j} \prod_{i} x_{i,j} \right) \right]\right| $. By the fact that $(\mathbb{E}[X])^2\leq \mathbb{E}[X^2]$, we have that 

\begin{align*}
\gamma_{\alpha}^2\leq &\mathbb{E}_{x_1,\dots,x_{k-1}} \left[\mathbf{1}_{S_{k}}(x_1,\dots,x_{k-1})^2\mathbb{E}_{x_{k},x_{k}'} \left[ \prod_{i\leq k-1} \mathbf{1}_{S_{i}}(x_1,\dots,x_{k}) \mathbf{1}_{S_{i}}(x_1,\dots,x_{k}')\chi_{\alpha}\left( \sum_{j} x_{k,j} x_{k,j}'\prod_{i\leq k-1} x_{i,j} \right)\right]\right] \\
\leq &\mathbb{E}_{x_{k},x_{k}'} \mathbb{E}_{x_1,\dots,x_{k-1}} \left[ \prod_{i\in[k-1]} \left(\mathbf{1}_{S_{i}}(x_1,\dots,x_{k})\cdot \mathbf{1}_{S_{i}}(x_1,\dots,x_{k}')\right)  \chi_{\alpha}\left( \sum_{j} x_{k,j}\cdot x_{k,j}'\prod_{i\in[k-1]} x_{i,j} \right) \right]
\end{align*}
By applying the this argument $(k-1)$ times, we have that
\[
\gamma_{\alpha}^{2^{k-1}}\leq \mathbb{E}_{(x_2,\dots,x_{k}),(x_2',\dots,x_{k}')} \left[\left|\mathbb{E}_{x_1} \left[ \chi_{\alpha}\left( \sum_{j} x_{1,j}\cdot \prod_{i>1} x_{i,j}x_{i,j}' \right) \right]\right|\right]
\]
Recall that for any character $\chi_{\alpha}$, it holds that $\sum_{z\in\mathbb{F}_{q}}\chi_{\alpha}(z)=0$. Hence, for any $(x_2,\dots,x_{k}),(x_2',\dots,x_{k}')$, if there is a $j\in[r]$ such that $\prod_{i>1} x_{i,j}x_{i,j}'\neq 0$, we have that 
\[
\mathbb{E}_{x_1}\left[\chi_{\alpha}\left( \sum_{j} x_{1,j}\cdot \prod_{i>1} x_{i,j}x_{i,j}' \right)\right] = 0 
\]
Therefore,
\[
\gamma_{\alpha}^{2^{k-1}}\leq \Pr_{(x_2,\dots,x_k,x_2',\dots,x_k')}\left[ \forall j\in[r], \prod_{i>1} x_{i,j}x_{i,j}'=0\right]\leq (2k/q)^{r}
\]
It implies that 
\[
D_{\alpha} = q^{r\cdot k} \cdot \gamma_{\alpha} \leq q^{r\cdot k} \cdot (2k/q)^{r/2^{k-1}}
\]
Recall that $|S| \geq q^{r\cdot k -1}$ and $r\geq 2^{k+1}$, then we have that 
\begin{align*}
\Pr_{(x_1,\dots,x_k) \in S}[\gip(x_1,\dots,x_k) = v]
&= \frac{1}{q} + \frac{1}{q\cdot|S|} \sum_{\alpha \neq 0} \chi_{\alpha}(-v) \sum_{(x_1,\dots,x_k)\in S} \chi_{\alpha}\left(\gip(x_1,\dots,x_k)\right) \\
&\geq \frac{1}{q} -\frac{1}{|S|} \left| \sum_{(x_1,\dots,x_k)\in S} \chi_{\alpha}\left(\gip(x_1,\dots,x_k)\right)\right| \\
&\geq \frac{1}{q} - \frac{q^{r\cdot k} \cdot (2k/q)^{r/2^{k-1}}}{|S|} \geq \frac{1}{q} - q\cdot (2k/q)^{4}
\end{align*}
The first inequality is by the triangle inequality and $|\chi_{\alpha}(-v)| = 1$. The claim then follows.
\end{proof}

A bonus result of Lemma \ref{disperser} is that we construct a $k$-party NOF problem that requires $\Omega(n/2^{k})$ randomized communication cost.
\begin{corollary}
For every $n>0$ and $k\leq \log n - 5\log\log n$. Let $q$ be a $n/2^{k+1}$-bit length prime number, and let $r=2^{k+1}$. Define $G:(\mathbb{F}_q^{r})^{k}\rightarrow\{0,1\}$ be a function defined by 
\[
G(x_1,\dots,x_k) = \gip(x_1,\dots,x_k) \mod 2
\]
Then $G$ is a function with input length $n$, and has a randomized NOF lower bounds $\Omega(n/2^{k+1})$
\end{corollary}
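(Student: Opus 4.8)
The plan is to prove the corollary via the discrepancy method for number-on-forehead communication, applied to the $\pm1$-encoding $\widetilde G := 1-2G$ of $G$. Recall that if $\mu$ denotes the uniform distribution on $(\mathbb F_q^r)^k$ and $\mathrm{disc}_\mu(\widetilde G) := \max_S\bigl|\mathbb E_{x\sim\mu}[\widetilde G(x)\mathbf 1_S(x)]\bigr|$, the maximum taken over cylinder intersections $S$, then any randomized $k$-party NOF protocol computing $G$ with error $1/3$ must communicate at least $\log_2\bigl(\tfrac{1}{3\,\mathrm{disc}_\mu(\widetilde G)}\bigr)$ bits. Since $q$ is an $(n/2^{k+1})$-bit prime we have $\log q=\Theta(n/2^{k+1})$, so it suffices to show $\mathrm{disc}_\mu(\widetilde G)=O(1/q)$; the lower bound $\Omega(\log q)=\Omega(n/2^{k+1})$ will then follow. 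I will also note that each player's input $x_i\in\mathbb F_q^r$ is encoded in $r\lceil\log q\rceil=2^{k+1}\cdot(n/2^{k+1})=n$ bits, so the input length is indeed $n$.

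To bound the discrepancy I would fix a cylinder intersection $S=S_1\cap\cdots\cap S_k$ and write $\widetilde G=\phi\circ\gip$, where $\phi:\mathbb F_q\to\{-1,1\}$ sends the canonical representative $a\in\{0,\dots,q-1\}$ to $(-1)^{a}$. Expanding $\phi=\sum_{\alpha\in\mathbb F_q}\hat\phi(\alpha)\chi_\alpha$ in additive characters and using linearity of expectation,
\[
\mathbb E_x\!\bigl[\widetilde G(x)\mathbf 1_S(x)\bigr]\;=\;\sum_{\alpha\in\mathbb F_q}\hat\phi(\alpha)\,\mathbb E_x\!\bigl[\chi_\alpha(\gip(x))\mathbf 1_S(x)\bigr].
\]
The $\alpha=0$ term equals $\hat\phi(0)\cdot|S|/q^{rk}$; since $q$ is an odd prime, $\{0,\dots,q-1\}$ contains exactly one more even residue than odd, so $\hat\phi(0)=1/q$, and this term has absolute value at most $1/q$. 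For each $\alpha\neq0$ the inner expectation is exactly the quantity $\gamma_\alpha$ appearing in the proof of Lemma~\ref{disperser}, and the character-sum bound established there gives $\gamma_\alpha\le(2k/q)^{r/2^{k-1}}=(2k/q)^4$ since $r=2^{k+1}$. Finally, because $\phi$ is $\pm1$-valued, Parseval gives $\sum_\alpha|\hat\phi(\alpha)|^2=1$, hence $\sum_{\alpha\neq0}|\hat\phi(\alpha)|\le\sqrt q$ by Cauchy--Schwarz. Combining these estimates,
\[
\bigl|\mathbb E_x[\widetilde G(x)\mathbf 1_S(x)]\bigr|\;\le\;\frac1q+\sqrt q\cdot\Bigl(\frac{2k}{q}\Bigr)^{4}\;=\;\frac1q+\frac{16k^4}{q^{7/2}}.
\]
Under the hypothesis $k\le\log n-5\log\log n$ we get $2^{k+1}\le 2n/(\log n)^5$, so $n/2^{k+1}\ge\tfrac12(\log n)^5$ and hence $q\ge 2^{(\log n)^5/2-1}$, which is astronomically larger than $k\le\log n$; in particular $16k^4\le q^{5/2}$, so the second term is at most $1/q$. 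Therefore $\mathrm{disc}_\mu(\widetilde G)\le 2/q$, and the discrepancy method yields the claimed randomized NOF lower bound $\Omega(\log q)=\Omega(n/2^{k+1})$.

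The one step that requires genuine care is the reuse of the character-sum bound from the proof of Lemma~\ref{disperser}. There, the estimate $\gamma_\alpha\le(2k/q)^{r/2^{k-1}}$ is obtained by iterating the Cauchy--Schwarz unfolding over the $k$ coordinates, in the course of which the cylinder indicators are merely bounded by $1$ in absolute value; the size hypothesis $|S|\ge q^{rk-1}$ is used only afterwards, when one divides by $|S|$. Consequently the very same bound holds for an arbitrary cylinder intersection $S$ with no size restriction, which is precisely what the discrepancy argument needs. Everything else --- the Fourier expansion of the parity function $\phi$, the evaluation $\hat\phi(0)=1/q$, and the crude bound $\|\hat\phi\|_1\le\sqrt q$, which is negligible against the enormous field size $q$ --- is routine, so I do not anticipate any further obstacle.
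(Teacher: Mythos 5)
Your proof is correct and follows the route the paper clearly intends when it calls the corollary ``a bonus result of Lemma~\ref{disperser}'' (the paper itself gives no explicit proof): apply the NOF discrepancy method, Fourier-expand the $\mathbb{F}_q\to\{\pm1\}$ parity function in additive characters, and reuse the bound $\gamma_\alpha\le(2k/q)^{r/2^{k-1}}$ from Lemma~\ref{disperser}. Your key observation is accurate --- the iterated Cauchy--Schwarz argument bounds $\gamma_\alpha$ for an arbitrary cylinder intersection, and the hypothesis $|S|\ge q^{rk-1}$ enters Lemma~\ref{disperser} only in the final division by $|S|$, so the bound transfers verbatim to the discrepancy setting --- and your handling of $\hat\phi(0)=1/q$, the Parseval/Cauchy--Schwarz tail bound $\sum_{\alpha\ne 0}|\hat\phi(\alpha)|\le\sqrt q$, the input-length accounting, and the constraint $k\le\log n-5\log\log n$ (ensuring $16k^4\ll q^{5/2}$) are all sound.
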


\section{A One-Way NOF Lower Bound of Set Disjointness}

In this section, we present a new proof of the \( \Omega(n) \) deterministic one-way communication lower bound for the three-party NOF model. The proof follows a similar structure to that of Theorem~\ref{main_theorem}. However, the key difference is that we no longer have a large gadget with disperser properties. Instead, we incorporate the density increment argument by Yang and Zhang~\cite{simulationmethod,yang2023lifting}.

We first recall that in the two-party communication, for inputs $x,y$, $\mathrm{DISJ}_2(x, y)=1$ if and only if $x$ and $y$ are disjoint, i.e., 
$
\bigwedge_{i=1}^n (\overline{x}_i \vee \overline{y}_i).
$
\begin{definition}
Let $x, y, z \in \{0,1\}^n$. The three-party set disjointness $\mathrm{DISJ}_3$ is defined as
\[
\mathrm{DISJ}_3(x, y, z) := \mathrm{DISJ}_2\big(z,\, x \land y\big) = \bigwedge_{i=1}^n \left( \overline{z}_i \vee \overline{x}_i \vee \overline{y}_i \right).
\]
\end{definition}

\noindent We now show that the deterministic one-way NOF communication complexity of $\mathrm{DISJ}_3$ is $\Omega(n)$.

\begin{proof}[Proof of Theorem \ref{DISJ}]
For any protocol $\Pi$ with communication complexity $o(n)$, we aim to show that there is a message pair $(\pi^*_A, \pi^*_B)$, a pair of distinct inputs $z_0^*, z_1^* $, and a pair $(x^{*}, y^{*}) $, such that:
\begin{itemize}
\item $\Pi_A(y^{*}, z_{0}^{*}) = \Pi_{A}(y^{*}, z_1^{*}) = \pi^*_A$.
\item $\Pi_B(x^{*}, z_{0}^{*}, \pi^*_A) = \Pi_B(x^{*}, z_1^{*}, \pi^*_A) = \pi^*_B$.
\item $\mathrm{DISJ}_3(z_0^{*},x^{*},y^{*}) \neq \mathrm{DISJ}_3(z_1^{*},x^{*},y^{*}).$
\end{itemize}

Define the set $D_0 := \left\{(x, y) \in \{0,1\}^n \times \{0,1\}^n : \bigwedge_{i=1}^n (\overline{x}_i \vee \overline{y}_i)=1\right\}$, i.e., the set of all pairs of disjoint sets. By an averaging argument, there exists a transcript $(\pi^*_A, \pi^*_B)$ such that the set
\[
A := \{(z, x, y) \in \{0,1\}^n \times \{0,1\}^n \times \{0,1\}^n : \Pi_A(y, z) = \pi^*_A, \ \Pi_B(x, z, \pi^*_A) = \pi^*_B, \ \text{and } (x,y)\in D_0 \}
\]
has size at least
\[
|A| \geq 2^n \cdot |D_0| \cdot 2^{-o(n)}.
\]

\noindent Similar to Lemma~\ref{graph}, we now prove the following largeness lemma.

\begin{lemma}\label{HD}
Let $G = (L, R, E)$ be a bipartite graph with $L = \{0,1\}^n$. Suppose the number of edges satisfies $|E| \geq  2^n \cdot |R| \cdot 2^{-\delta\cdot n}$ for some $\delta<0.1$. Then there exist vertices $\ell, \ell' \in \{0,1\}^n$ with Hamming distance $d_H(\ell, \ell') \geq 0.1n$ such that
\[
|N(\ell) \cap N(\ell')| \geq |R| \cdot 2^{-2\delta n-2},
\]
where $N(\ell) \subseteq R$ denotes the neighborhood of $\ell$ in $R$, and $d_H(\ell, \ell')$ denotes the Hamming distance.
\end{lemma}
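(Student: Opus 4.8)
The plan is to follow the proof of Lemma~\ref{graph} almost verbatim and then graft on one extra step: throw away the Hamming-close pairs $\ell,\ell'$ and check that the surviving pairs still carry essentially all of the common-neighbour mass. Write $M:=\sum_{\{\ell,\ell'\}}|N(\ell)\cap N(\ell')|$, where the sum is over unordered pairs of distinct vertices of $L=\{0,1\}^n$. Exactly as in Lemma~\ref{graph}, $M=\sum_{r\in R}\binom{\deg(r)}{2}\ge \frac{(|E|-|R|)^2}{2|R|}$ by Cauchy--Schwarz; since $|E|\ge 2^n|R|\,2^{-\delta n}$ with $\delta<0.1$, the subtracted $|R|$ is negligible, so for $n$ large we get $M\ge 0.49\cdot 2^{2n}\,|R|\,2^{-2\delta n}$.

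Next I would split $M$ according to Hamming distance. A pair with $d_H(\ell,\ell')<0.1n$ contributes at most $|R|$ to $M$, and the number of such pairs is at most $2^n\sum_{j\le 0.1n}\binom nj\le 2^{(1+H(0.1))n}$, where $H$ is the binary entropy function and $H(0.1)<0.47$. The key arithmetic point is that $1-2\delta-H(0.1)>0$ for every $\delta<0.1$, so for $n$ large the total contribution of the Hamming-close pairs, which is at most $2^{(1+H(0.1))n}\,|R|$, is a vanishingly small fraction of $2^{2n}|R|\,2^{-2\delta n}$; hence the pairs with $d_H(\ell,\ell')\ge 0.1n$ together still contribute at least $0.48\cdot 2^{2n}\,|R|\,2^{-2\delta n}$ to $M$.

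To finish, the number of pairs with $d_H(\ell,\ell')\ge 0.1n$ is at most $\binom{2^n}{2}<2^{2n-1}$, so by averaging there is a single such pair with $|N(\ell)\cap N(\ell')|\ge 0.96\,|R|\,2^{-2\delta n}\ge |R|\,2^{-2\delta n-2}$, which is the claim (with slack to spare, the $-2$ in the exponent comfortably absorbing lower-order terms). The only place that needs actual thought is the middle step: bounding the size of a Hamming ball of radius $0.1n$ by the entropy estimate and confirming it is exponentially smaller than $2^n$ by the right margin, namely $2\delta+H(0.1)<1$ — this is exactly where the hypothesis $\delta<0.1$ is used. Everything else is the Cauchy--Schwarz-plus-averaging computation already performed for Lemma~\ref{graph}.
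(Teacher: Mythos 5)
Your proposal is correct and follows essentially the same approach as the paper's proof: both apply the Cauchy--Schwarz bound $\sum_r \binom{\deg(r)}{2}\ge \frac{(|E|-|R|)^2}{2|R|}$ from Lemma~\ref{graph}, then split the resulting common-neighbour mass by Hamming distance, bound the close-pair contribution via the exponential smallness of a Hamming ball of radius $0.1n$, and finish by averaging over far pairs. The only differences are presentational --- you work with sums over unordered pairs where the paper works with expectations, and you bound $\sum_{j\le 0.1n}\binom{n}{j}$ by $2^{H(0.1)n}$ where the paper uses $(e/0.1)^{0.1n}$; both arguments tacitly require $n$ to be a sufficiently large constant.
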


\begin{proof}
Since $|E| \geq 2^n \cdot |R| \cdot 2^{-\delta n}$, using similar ideas from Lemma~\ref{graph}, we have that 
\[
\mathbb{E}_{\ell, \ell'}\left[ |N(\ell) \cap N(\ell')| \right] 
= \sum_{r \in R} \mathbb{E}_{\ell, \ell'} \left[ \mathbf{1}(\ell, r) \cdot \mathbf{1}(\ell', r) \right] \geq \frac{2}{|L|(|L| - 1)} \cdot \frac{(|E| - |R|)^2}{2|R|} 
\geq |R| \cdot 2^{-2\delta n - 1},
\]
where $\mathbf{1}(\ell, r)$ denotes the indicator that $(\ell, r) \in E$. We split the expectation based on the Hamming distance between $\ell$ and $\ell'$. Specifically, we write $\mathbb{E}_{\ell, \ell'}\left[ |N(\ell) \cap N(\ell')| \right]$ as 
\[
\mathbb{E} \left[ |N(\ell) \cap N(\ell')| \cdot \mathbf{1}\left(d_H(\ell, \ell') < 0.1n\right) \right] 
+ \mathbb{E} \left[ |N(\ell) \cap N(\ell')| \cdot \mathbf{1}\left(d_H(\ell, \ell') \geq 0.1n\right) \right].
\]

\noindent For the first term, we upper bound it by:
\[
\Pr_{\ell, \ell'}\left[ d_H(\ell, \ell') < 0.1n \right] 
\leq 2^{-n} \cdot \sum_{i=0}^{0.1n} \binom{n}{i} 
\leq 2^{-n} \cdot \left( \frac{e}{0.1} \right)^{0.1n} 
\leq 2^{-n/2}.
\]
Hence, we have
\[
\mathbb{E} \left[ |N(\ell) \cap N(\ell')| \cdot \mathbf{1}\left(d_H(\ell, \ell') < 0.1n\right) \right] 
\leq \mathbb{E} \left[ |R| \cdot \mathbf{1}\left(d_H(\ell, \ell') < 0.1n\right) \right] 
\leq |R| \cdot 2^{-n/2}.
\]

\noindent Therefore, the contribution from the second term is at least
\[
\mathbb{E} \left[ |N(\ell) \cap N(\ell')| \cdot \mathbf{1}\left(d_H(\ell, \ell') \geq 0.1n\right) \right] 
\geq |R| \cdot 2^{-2\delta n - 1} - |R| \cdot 2^{-n/2} 
\geq |R| \cdot 2^{-2\delta n - 2},
\]
where the last inequality holds for small constant $\delta < 0.1$. This completes the proof.

\end{proof}

By applying Lemma~\ref{HD} with $L = \{0,1\}^n$, $R = D_0$, and $E = A$, we obtain a pair of distinct strings $z_0^*, z_1^* \in \{0,1\}^n$ with Hamming distance $d_H(z_0^*, z_1^*) \geq 0.1\cdot n$ such that the following set
\[
R := \left\{ (x, y) : \Pi_A(y, z_0^*) = \pi^*_A, \Pi_B(x, z_0^*, \pi^*_A) = \pi^*_B \right\} \cap \left\{ (x, y) : \Pi_A(y, z_1^*) = \pi^*_A, \Pi_B(x, z_1^*, \pi^*_A) = \pi^*_B \right\}
\]
has large intersection with the set $D_0$, specifically,
\[
|R \cap D_0| \geq |D_0| \cdot 2^{-o(n)}.
\]

\noindent Notice that for fixed $(\pi^*_A, \pi^*_B)$ and $z_0^*,z_1^*$, the set $R$ is a rectangle. Unlike the proof of the lifting theorem (Theorem \ref{main_theorem}, we can no longer apply a disperser property on the set $R\cap D_0$ as there is no large gadget. Instead, we take the approach by \cite{simulationmethod, yang2023lifting}. For each $\ell \in [n]$, define the set
\[
D_\ell := \left\{ (x, y) \in \{0,1\}^n \times \{0,1\}^n : x \land y = e_\ell \right\},
\]
where $e_\ell$ is the unit vector with a $1$ at coordinate $\ell$ and $0$ elsewhere. Yang and Zhang proved the following pseudorandomness lemma.

\begin{lemma}[Theorem 1.1 of \cite{yang2023lifting}]
\label{pseudorandom}
Let $R \subseteq \{0,1\}^n \times \{0,1\}^n$ be any rectangle. If
\[
|R \cap D_0| \geq |D_0| \cdot 2^{-c},
\]
then there exists a set $S \subseteq [n]$ of size at least $n - 2c$ such that $R \cap D_\ell \neq \emptyset$ for every $\ell \in S$.
\end{lemma}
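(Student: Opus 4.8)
This is Theorem~1.1 of~\cite{yang2023lifting}, so one option is simply to cite it; here is how I would instead reprove it, by a short density-increment argument inducting on the number $n$ of coordinates. For a rectangle $R = X \times Y$, call a coordinate $\ell$ \emph{dead} if $R \cap D_\ell = \emptyset$, and let $\phi(R)$ be the number of dead coordinates; since the promised set is exactly $S = \{\ell : R \cap D_\ell \neq \emptyset\}$, we have $|S| = n - \phi(R)$, so it suffices to prove the quantitative bound
\[
|R \cap D_0| \;\le\; |D_0|\cdot 2^{-\phi(R)/2},
\]
which, combined with the hypothesis $|R\cap D_0|\ge |D_0|\cdot 2^{-c}$, gives $\phi(R)\le 2c$, i.e. $|S|\ge n-2c$.

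For the inductive step I would fix one coordinate $\ell$ and ``slice'' it off. For $a\in\{0,1\}$, let $X_a$ and $Y_a$ be the slices of $X,Y$ (as subsets of $\{0,1\}^{[n]\setminus\{\ell\}}$) on which the $\ell$-th coordinate equals $a$, and let $D_0'$ be the set of disjoint pairs on the coordinate set $[n]\setminus\{\ell\}$, so that $|D_0| = 3\,|D_0'|$. Two facts drive the argument. First, \emph{grading by the value of coordinate $\ell$}: inside $D_0$ the $\ell$-th coordinate never equals $(1,1)$, so
\[
|R\cap D_0| \;=\; |(X_0\times Y_0)\cap D_0'| \;+\; |(X_1\times Y_0)\cap D_0'| \;+\; |(X_0\times Y_1)\cap D_0'|,
\]
and moreover $R\cap D_\ell = \emptyset$ precisely when $(X_1\times Y_1)\cap D_0' = \emptyset$. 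Second, \emph{inheritance of deadness}: for any $\ell'\neq \ell$, a pair $(u,v)$ with $u\wedge v = e_{\ell'}$ lying in one of the three sub-rectangles $X_a\times Y_b$ above lifts, by reinstating coordinate $\ell$ with the corresponding value $(a,b)\in\{(0,0),(1,0),(0,1)\}$, to a pair in $R\cap D_{\ell'}$ (one checks that the reinstated coordinate contributes $0$ to the coordinatewise AND in each of the three cases); hence every coordinate dead for $R$ other than $\ell$ stays dead for each of the three sub-rectangles, so each of them has deficiency at least $\phi(R)-1$.

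The induction then closes: if $\phi(R)=0$ the bound is trivial, and otherwise one picks a dead coordinate $\ell$, applies the inductive hypothesis to the three sub-rectangles (each living on $n-1$ coordinates and each with deficiency at least $\phi(R)-1$), and recombines via the grading identity. The only delicate point is this recombination: the identity has three terms, so a careless bound loses a $\log 3$ factor in the exponent and yields only $\phi(R)\le 2c+O(1)$; since Lemma~\ref{pseudorandom} is applied only with $c=o(n)$, this additive slack is harmless, but to land the stated constant $2$ exactly one must be a bit more careful, e.g.\ by exploiting that the $(1,1)$-slice is empty to bound the three surviving slices against each other rather than estimating each separately. I expect the inheritance-of-deadness step -- verifying that a unique-intersection witness on $n-1$ coordinates always lifts back through the appropriate slice -- together with this recombination bookkeeping to be the only points requiring genuine care; the remaining estimates are routine.
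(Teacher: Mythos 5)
Your high-level strategy (induct on $n$, find a dead coordinate, restrict, and recombine) is in the same spirit as the paper's density-increment argument, but the recombination step as you describe it does not close the induction, and this is not merely a constant-factor blemish. Concretely: suppose the inductive hypothesis is $|R' \cap D_0^{I'}| \le |D_0^{I'}|\cdot 2^{-\phi(R')/2}$ for rectangles $R'$ on fewer coordinates. Your grading identity writes $|R\cap D_0|$ as the sum of the three surviving slice contributions, each slice living on $n-1$ coordinates with deficiency at least $\phi(R)-1$, so the IH bounds each by $|D_0'|\cdot 2^{-(\phi(R)-1)/2}$. Summing the three and using $|D_0| = 3|D_0'|$ yields $|R\cap D_0|\le |D_0|\cdot 2^{-(\phi(R)-1)/2}$, which is \emph{weaker} than the target $|D_0|\cdot 2^{-\phi(R)/2}$ by a factor of $\sqrt{2}$. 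This slack is incurred at every one of the $\phi(R)$ steps, so it compounds and the induction yields nothing nontrivial. You gesture at the fix (``bound the three surviving slices against each other'' using emptiness of the $(1,1)$-slice), but this is exactly the missing idea, and it is not obvious how to execute it along the slicing route: the three slice densities could all equal the density of $R$, so no slice is automatically sparser.

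The paper's projection lemma (Lemma~\ref{lemma: k_u_p}) resolves this by \emph{not} summing over three slices. Instead it considers two projections $\Pi_{i,X}(R)$ and $\Pi_{i,Y}(R)$ (project one side to the $0$-slice of coordinate $i$, keep the other side's full projection). The emptiness of the $(1,1)$-slice is exploited through the observation that each point of $\Pi_{i,\cdot}(R)\cap D_0^{I'}$ has at most \emph{two} preimages in $R\cap D_0^I$ (because three preimages would, by the rectangle property, force the forbidden $(1,1)$ extension and create a point of $R\cap D_i^I$). Partitioning the projected points by preimage count (your sets $A$ and $B$), one shows the two-preimage points land in both projections while the one-preimage points land in at least one, so by averaging some choice of side $C$ satisfies $|\Pi_{i,C}(R)\cap D_0^{I'}| \ge \tfrac12|R\cap D_0^I|$. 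Since the normalizing constant drops from $3^{|I|}$ to $3^{|I|-1}$, this is a genuine density increment of $\log(3/2) \ge 1/2$, and the induction closes with the stated constant $2$. In short: the crucial step your proposal leaves open is that you must commit to one of two carefully chosen projections rather than recombine all three slices, and you must argue via the $\le 2$-preimage bound plus averaging that a good choice of side always exists.
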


\noindent
We include the proof of Lemma~\ref{pseudorandom} in Section~\ref{sec: pseudorandomness} for completeness.

\medskip

Since $d_H(z_0^*, z_1^*) \geq 0.1\cdot n$, let $T \subseteq [n]$ denote the set of coordinates where $z_0^*$ and $z_1^*$ differ. By Lemma~\ref{pseudorandom}, assuming the communication complexity of $\Pi$ is $o(n)$, there exists a set $S \subseteq [n]$ of size $|S| = n - o(n)$ such that for every $i \in S$, there exists a pair $(x^*, y^*) \in R$ satisfying $x^* \land y^* = e_i$.

Since $|T| \geq 0.1n$, we have $T \cap S \neq \emptyset$. Fix any index $j \in T \cap S$. Then:
\begin{itemize}
    \item We have $(z_0^*)_j \neq (z_1^*)_j$ by definition of $T$,
    \item And there exists $(x^*, y^*) \in R$ such that $x^* \land y^* = e_j$.
\end{itemize}
\noindent Combining the two facts, we obtain two inputs $(x^*, y^*, z_0^*) \in A$ and $(x^*, y^*, z_1^*) \in A$ such that
\[
\mathrm{DISJ}_3(z_0^*, x^*, y^*) = (z_0^*)_j \neq (z_1^*)_j = \mathrm{DISJ}_3(z_1^*, x^*, y^*).
\]

\noindent
This shows that $A$ is not monochromatic, thereby $\Pi$ can not solve $\mathrm{DISJ}_3$ completely.

\end{proof}

\section{Open problems}
A natural open problem is to extend our lifting theorem to the randomized setting. Specifically, we propose the following conjecture:

\begin{conjecture}\label{main_conjecture}
For any partial function $f : [q] \times [q] \to \{0,1\}$, we have
\[
\mathrm{ORCC^{NOF}}(f \circ \gip) = \Theta(\mathrm{ORCC}(f)),
\]
where $\mathrm{ORCC}(f)$ denotes the one-way randomized communication complexity of $f$, and $\mathrm{ORCC^{NOF}}(f \circ \gip)$ denotes the one-way number-on-forehead (NOF) randomized communication complexity of $f \circ \gip$.
\end{conjecture}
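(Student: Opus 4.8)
The plan is to keep the three–stage skeleton of the deterministic proof of Theorem~\ref{main_theorem} --- (i)~replace Lemma~\ref{One_way} by a randomized analogue, (ii)~fix public coins and a transcript of the first $k$ players and run the largeness lemma, (iii)~invoke the $\gip$ disperser (Lemma~\ref{disperser}) --- but to thread an error budget through all three stages. First I would set up the hard instance. By Yao's principle together with the standard connection between one-way randomized communication and covering numbers of the row space under an adversarial column distribution (the randomized stand-in for Lemma~\ref{One_way}), from $\mathrm{ORCC}(f)\ge c$ one extracts a constant $\eps>0$, a distribution $\lambda$ on the columns $[q]$, and a set $Z\subseteq[q]$ of rows with $|Z|\ge 2^{\Omega(c)}$ that are pairwise \emph{$\eps$-distinguishable}: $\Pr_{v\sim\lambda}[f(z_0,v)\neq f(z_1,v)]\ge\eps$ for all distinct $z_0,z_1\in Z$. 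I would lift this to the distribution $\nu$ on $[N]^k\times Z$ that samples $z\sim\mathrm{Unif}(Z)$, then $v\sim\lambda$, then $(x_1,\dots,x_k)$ uniformly conditioned on $\gip(x_1,\dots,x_k)=v$ --- so $\nu=\mathrm{Unif}(Z)\otimes\rho$ with $\rho(x)\propto\lambda(\gip(x))$, and under $\nu$ the pair $(z,\gip(x))$ is distributed as $\mathrm{Unif}(Z)\times\lambda$. Matching the column marginal of the lift to $\lambda$ is exactly what makes the disperser cooperate with a non-uniform $\lambda$: on any cylinder-intersection fibre $F$ of counting density $\ge q^{-1}$, Lemma~\ref{disperser} forces $\Pr_{x\sim\mathrm{Unif}(F)}[\gip(x)=v]\approx 1/q$ uniformly in $v$, hence $\Pr_{x\sim\rho|_F}[\gip(x)\in V^{*}]\approx\lambda(V^{*})$ for any target $V^{*}\subseteq[q]$.

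Next, given a one-way NOF protocol $\Pi$ for $f\circ\gip$ of cost $t=o(c)$, I would fix the public coins and then --- at a constant cost in error --- pick a transcript $\pi=(\pi_1,\dots,\pi_k)$ of the first $k$ players so that $\hat A_\pi=\{(x,z): z\in Z,\ \text{the first }k\text{ players output }\pi\}$ has $\nu(\hat A_\pi)\ge 2^{-o(c)}$; as in the deterministic proof each fibre $\hat A_\pi|_z$ is a cylinder intersection, and on $\hat A_\pi$ the last player's output is a fixed function $O_\pi(x_1,\dots,x_k)$ of $x$ alone. A $\rho$-weighted version of Lemma~\ref{graph} (Cauchy--Schwarz against $\rho$, exactly as in Lemma~\ref{lemma:large_intersection}) then yields distinct $z_0^{*},z_1^{*}\in Z$ whose common fibre $F=\hat A_\pi|_{z_0^{*}}\cap\hat A_\pi|_{z_1^{*}}$ is a cylinder intersection with $\rho(F)\ge 2^{-o(c)}$; since $\rho$ is $\gip$-measurable, $\rho(F)$ is comparable to the counting density of $F$, which is $\ge q^{-1}$ once $q$ is large relative to the communication budget, so Lemma~\ref{disperser} applies to $F$. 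Because $z_0^{*},z_1^{*}\in Z$ they are $\eps$-distinguishable, so $V^{*}=\{v:f(z_0^{*},v)\neq f(z_1^{*},v)\}$ has $\lambda(V^{*})\ge\eps$, and the previous paragraph gives $\Pr_{x\sim\rho|_F}[\gip(x)\in V^{*}]\ge\eps/2$. For every such $x$ the last player outputs the same value $O_\pi(x)$ on $(x,z_0^{*})$ and $(x,z_1^{*})$, hence $\Pi$ is wrong on at least one of them; averaging this over $(z_0^{*},z_1^{*})$ and using a union bound to turn ``wrong on $(x,z_0^{*})$ or on $(x,z_1^{*})$'' into ``wrong on $(x,z)$'' at a factor $2$ is the mechanism by which one would like to contradict the hypothesis that $\Pi$ has error $\ll\eps$.

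The main obstacle is precisely this last averaging step, which is why the statement is only a conjecture. The deterministic argument is content to exhibit a single indistinguishable pair, but to beat a \emph{constant} error probability one needs a constant fraction of the $\nu$-mass to be erred on, i.e.\ the transcripts $\pi(x,z_0^{*})$ and $\pi(x,z_1^{*})$ must \emph{collide} for a constant fraction of $(x,z_0^{*},z_1^{*})$. With communication $t=o(c)$ the first $k$ players may use up to $2^{t}$ transcripts, so the pigeonhole/Cauchy--Schwarz bound on the collision probability is only $2^{-\Theta(t)}$, which is sub-constant, and na\"ive error amplification loses exactly the offsetting factor of $t$. Overcoming this seems to require either a hardcore-set / direct-product strengthening showing that a cheap one-way NOF protocol must err on \emph{many} distinguishable pairs simultaneously, or an information-complexity route lower-bounding the internal information the first $k$ players reveal about $z$ --- the latter colliding with the difficulty that, in the one-way order, the preimage-sampling step $\gip(x)=v$ cannot be split between the ``$\{1,\dots,k\}$'' side and the last player without extra communication, so one would have to exploit the pseudorandomness of $\gip$ (a fortification-type refinement of Lemma~\ref{disperser}) to let the first $k$ players run on a preimage they can compute themselves. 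Extending past the one-way order to general NOF is harder still, since even the cylinder-intersection rigidity of the sets $\hat A_\pi|_z$ that underlies the whole approach is special to one round per player.
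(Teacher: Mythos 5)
This statement is Conjecture~\ref{main_conjecture}, which the paper explicitly leaves open; there is no proof in the paper to compare against, and the authors propose it as a direction for future work (noting that it would imply a quantum/classical separation for one-way NOF). Your proposal correctly treats it as a conjecture: rather than claiming a proof, you give a detailed sketch of how the machinery from Theorem~\ref{main_theorem} would be threaded through a Yao's-principle reduction, and you honestly pinpoint where the argument breaks. Your diagnosis of the obstruction is essentially right. The deterministic proof only needs to exhibit a single indistinguishable pair inside one heavy transcript-fibre; a randomized lower bound must account for a constant fraction of the hard distribution's mass being erred on, and after fixing the public coins the transcript-collision probability $\Pr[\pi(x,z_0)=\pi(x,z_1)]$ is only $2^{-\Theta(t)}$ in the worst case, which is exactly the quantity that the pigeonhole step cannot make constant. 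This is the genuine gap, and your suggestion that a direct-product / hardcore-set strengthening or an information-complexity route is needed to overcome it is the right class of ideas.

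One technical point worth flagging inside your sketch: the claim ``since $\rho$ is $\gip$-measurable, $\rho(F)$ is comparable to the counting density of $F$'' is not true in general. If the hard column distribution $\lambda$ extracted from Yao's principle is skewed, then $\rho(x)\propto\lambda(\gip(x))$ concentrates on a thin union of $\gip$-fibres of total counting measure as small as $q^{-1}$, so $\rho(F)\ge 2^{-o(c)}$ only guarantees counting density $q^{-1}\cdot 2^{-o(c)}$ for $F$, which falls just short of the hypothesis $|S|\ge q^{rk-1}$ in Lemma~\ref{disperser}. This is repairable --- the proof of Lemma~\ref{disperser} tolerates $|S|\ge q^{rk-1-\delta}$ once $r\ge 2^{k-1}(2+\delta)$, so one can buy the extra slack by enlarging the gadget --- but as written the step does not follow from the lemma as stated. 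It does not change the verdict: even with this fixed, the collision-probability obstacle you identified remains, which is why the statement is a conjecture and not a theorem.
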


Proving this conjecture would imply a separation between quantum and classical communication complexity in the one-way NOF model.

Another important open problem is to extend our lifting theorem (Theorem~\ref{main_theorem}) to general communication protocols beyond the one-way setting. Such a result could provide an optimal separation between randomized and deterministic NOF communication.

\normalem
\bibliographystyle{alpha}
\bibliography{reference.bib}

\appendix
\section{Proof of Lemma \ref{pseudorandom}}

\label{sec: pseudorandomness}
Now we prove Lemma~\ref{pseudorandom} by using the density increment argument by~\cite{simulationmethod,yang2023lifting}.

\begin{theorem}[Restatement of Lemma~\ref{pseudorandom}]
Let \( R \subseteq \{0,1\}^n \times \{0,1\}^n \) be any rectangle. If
\[
|R \cap D_0| \geq |D_0| \cdot 2^{-c},
\]
then there exists a subset \( S \subseteq [n] \) of size at least \( n - 2c \) such that \( R \cap D_i \neq \emptyset \) for every \( i \in S \).
\end{theorem}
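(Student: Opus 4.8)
The plan is to prove Lemma~\ref{pseudorandom} via a density increment argument on the rectangle $R = X \times Y$. The key structural fact about $D_0$ (pairs of disjoint sets) is that it decomposes along coordinates: a pair $(x,y)$ is disjoint iff for every $i \in [n]$, not both $x_i = y_i = 1$. The idea is to measure, for a given rectangle $R = X \times Y$, the \emph{density} $\mu(R) := |R \cap D_0| / |D_0|$, and to show that if $R \cap D_i = \emptyset$ for some coordinate $i$ (meaning the rectangle is ``blocked'' at coordinate $i$: one cannot find $(x,y) \in R$ with $x \wedge y = e_i$), then we can pass to a sub-rectangle $R'$ obtained by restricting coordinate $i$ (fixing $x_i$ or $y_i$ to a good value) such that $\mu(R') \geq 2 \mu(R)$, i.e.\ the density doubles. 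Since the density is at most $1$ and starts at $\mu(R) \geq 2^{-c}$, this doubling can happen for at most $c$ coordinates before we would reach density exceeding $1$; hence at most $2c$ coordinates can be problematic. Wait --- more carefully, each bad coordinate gives a factor-$2$ gain, so after restricting on more than $c$ bad coordinates the density would exceed $1$, a contradiction; so at most $c$ --- but to be safe against the bookkeeping (restricting a coordinate on the $X$-side versus $Y$-side, and the fact that a restriction might only give a factor close to $2$) the bound $n - 2c$ is what comes out. So the set $S$ of ``good'' coordinates, where $R \cap D_i \neq \emptyset$, has size at least $n - 2c$.

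The key step is the density-increment claim: \emph{if $R = X \times Y$ is a rectangle with $R \cap D_i = \emptyset$, then restricting coordinate $i$ appropriately at least doubles the density $\mu$.} To see why this should hold, condition on coordinate $i$. Write $X = X_0 \sqcup X_1$ according to the value of $x_i$, and similarly $Y = Y_0 \sqcup Y_1$. A pair $(x,y) \in D_0$ has $(x_i, y_i) \in \{(0,0),(0,1),(1,0)\}$ --- it cannot be $(1,1)$. The constraint $R \cap D_i = \emptyset$ says: there is no $(x,y) \in X_1 \times Y_1$ whose restriction to $[n]\setminus\{i\}$ is a disjoint pair (since $x \wedge y = e_i$ forces $x_i = y_i = 1$ and disjointness off $i$). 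So the ``$(1,1)$ at coordinate $i$, disjoint elsewhere'' contribution is entirely absent from the rectangle. Now $D_0$ itself, counting contributions by the value at coordinate $i$, splits into three equal-sized pieces (the pairs disjoint off $i$ with $(x_i,y_i)$ being $(0,0)$, $(0,1)$, or $(1,0)$), plus these three are in bijection with ``disjoint-off-$i$ pairs'' $= D_0^{(i)}$, a set on $n-1$ coordinates of size $|D_0|/ (\text{something})$; actually $|D_0| = 3 \cdot |D_0^{(i)}|$ where $D_0^{(i)} \subseteq \{0,1\}^{[n]\setminus i} \times \{0,1\}^{[n]\setminus i}$ has size $3^{n-1}$. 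The point: since $R$ misses the entire $(1,1)$-slice-relative-to-disjointness, all of $|R \cap D_0|$ is concentrated in the at most $2/3$ fraction of $D_0$ coming from $(x_i,y_i) \in \{(0,1),(1,0)\}$ together with the $(0,0)$ slice --- hmm, that gives only a $3/2$ factor, not $2$. The correct and cleaner route: restrict \emph{both} parties' $i$-th bit to $0$ is not forced; instead argue by a convexity/averaging argument over the four choices of fixing $x_i \in \{0,1\}$ and separately $y_i \in \{0,1\}$, using that the $(1,1)$ combination contributes zero, to extract a restriction (fixing just one side's bit) with density gain. I would follow \cite{yang2023lifting} here and carry out the precise case analysis; the factor-$2$ (rather than $3/2$) comes from restricting one coordinate's bit on one side, and then, within that, noting the interaction of the two fixed bits.

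The main obstacle I expect is getting the density-increment constant exactly right so that $c$ bad coordinates (hence $2c$ in the final statement, accounting for the $X$/$Y$ asymmetry in which side gets restricted) suffice, rather than a weaker bound like $n - O(c/\log(3/2))$. Concretely, the subtle point is: when $R \cap D_i = \emptyset$, one wants to fix a \emph{single bit} (say $x_i = 0$, or $y_i = 0$, or $x_i = 1$, or $y_i = 1$) to obtain a sub-rectangle whose intersection with $D_0$ is at least half of $|R \cap D_0|$ but whose ambient $|D_0|$-on-remaining-coordinates shrinks by a full factor of $3$ after two such restrictions --- one needs to be careful that two restrictions (one per side) at a bad coordinate together kill a $1/3$-fraction of $D_0$ while preserving the $R$-mass, netting the factor $2$ per bad coordinate after amortizing; the ``$2c$'' absorbs the fact that a bad coordinate may cost up to two restriction steps. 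I would handle this by a potential-function argument: track $\Phi := \log_2 |R \cap D_0| - \log_2(\text{number of undetermined coordinates' worth of } D_0)$ appropriately normalized, show it is non-decreasing and increases by at least some fixed amount each time we eliminate a bad coordinate, and conclude from $\Phi \leq 0$ throughout together with the starting value $\Phi \geq -c$ that the number of bad coordinates is at most $2c$. The rest --- verifying that the surviving coordinates $S$ indeed satisfy $R \cap D_i \neq \emptyset$, which is immediate from how $S$ is defined as the complement of the bad set --- is routine.
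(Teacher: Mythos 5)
Your overall strategy is the same as the paper's: define a density (the fraction of $D_0$ captured by $R$), and show that whenever $R\cap D_i=\emptyset$, restricting coordinate $i$ on one side increases the density by a fixed additive amount, so the density ceiling of $0$ (in log scale) bounds the number of bad coordinates. The citation to \cite{yang2023lifting} is the right one. But the proposal leaves the single load-bearing step unproved, and in trying to ``improve'' it, heads in the wrong direction.

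First, you do not need and will not get a factor-$2$ gain per bad coordinate, and the chase for it is what muddles your accounting. The paper's projection operation fixes \emph{one} bit to $0$ (either $x_i=0$ on the $X$-side or $y_i=0$ on the $Y$-side) and \emph{drops} coordinate $i$ from the index set. This shrinks the ambient denominator $|D_0^I|=3^{|I|}$ by a factor of $3$, while the numerator $|R\cap D_0|$ shrinks by at most a factor of $2$ — so the density ratio improves by a factor of at least $3/2$, i.e.\ $\log_2(3/2)>1/2$ additively in the log-density. Starting at $\geq -c$ and staying $\leq 0$, this caps the number of projections (hence bad coordinates, one projection each) at $2c$. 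That is exactly the bound in the statement; no extra bookkeeping trick or two-step-per-coordinate amortization is needed.

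Second, the ``at most factor $2$ loss in the numerator'' claim is precisely the technical heart that your proposal defers (``I would follow \cite{yang2023lifting} and carry out the precise case analysis''). The argument is: for each $(x',y')\in D_0^{I\setminus\{i\}}$, consider its set $\mathrm{ext}(x',y')$ of extensions lying in $R\cap D_0^I$; there are at most three candidate extensions, with $(x_i,y_i)\in\{(0,0),(0,1),(1,0)\}$. If all three were present, the rectangle property of $R=X\times Y$ would force $(x'|x_i{=}1)\in X$ and $(y'|y_i{=}1)\in Y$, producing an element of $R\cap D_i^I$ — contradiction. So $|\mathrm{ext}(x',y')|\leq 2$. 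Pairs with $|\mathrm{ext}|=2$ survive projection on \emph{both} sides (since both $(x'|0)\in X$ and $(y'|0)\in Y$), while pairs with $|\mathrm{ext}|=1$ survive on at least one side; averaging over the two choices of side $C\in\{X,Y\}$ gives a $C$ for which $2\,|\Pi_{i,C}(R)\cap D_0^{I\setminus\{i\}}|\geq |R\cap D_0^I|$. Without this $|\mathrm{ext}|\leq 2$ step — which is where both the rectangle structure and the hypothesis $R\cap D_i=\emptyset$ are actually used — the density-increment claim has no proof. Your sketch gestures at this (``the $(1,1)$ combination contributes zero'') but does not extract the factor, and the four-way case split over fixing $x_i\in\{0,1\}$ or $y_i\in\{0,1\}$ is a distraction: only fixing to $0$ is used, precisely so that the ambient $D_0$-count drops by the clean factor of $3$.
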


\noindent Recall that 
\[
D_0 := \left\{ (x, y) \in \{0,1\}^n \times \{0,1\}^n : x \land y = 0^n \right\}, \quad
D_\ell := \left\{ (x, y) \in \{0,1\}^n \times \{0,1\}^n : x \land y = e_\ell \right\}
\]
where \(e_\ell\) denotes the unit vector with a $1$ in the $\ell$-th coordinate and zeros elsewhere.
For any subset \( I \subseteq [n] \) and any \(\ell \in I\), we define
\[
D_0^I := \left\{ (x, y) \in \{0,1\}^I \times \{0,1\}^I : x \land y = 0^I \right\}, \quad
D_\ell^I := \left\{ (x, y) \in \{0,1\}^I \times \{0,1\}^I : x \land y = e_\ell \right\}.
\]

\noindent We now introduce the density function, which will be used in the density increment argument.

\begin{definition}[Density function]
Let \( I \subseteq [n] \), and let \( R = X \times Y \subseteq \{0,1\}^I \times \{0,1\}^I \) be a rectangle. The density of \( R \) with respect to \( D_0^I \) is defined as
\[
E^I(R) := \log \left( \frac{|R \cap D_0^I|}{|D_0^I|} \right).
\]
\end{definition}

\noindent
Note that \( E^I(R) \leq 0 \) for any rectangle \( R \), and that \( |D_0^I| = 3^{|I|} \), as each coordinate allows three disjoint assignments: \( (0,0), (0,1), (1,0) \). We will write \( E(R) \) when the underlying index set \( I \) is clear from context.

A crucial step in our argument is the projection operation, which allows us to restrict attention to a subset of coordinates while preserving the rectangle structure.

\begin{definition}[Projection]
Let \( R = X \times Y \subseteq \{0,1\}^I \times \{0,1\}^I \) be a rectangle.  
For any coordinate \( i \in I \) and side \( C \in \{X, Y\} \), the projection of \( R \) onto coordinate \( i \) from side \( C \) is defined as a rectangle \( \Pi_{i,C}(R) = X' \times Y' \subseteq \{0,1\}^{I \setminus \{i\}} \times \{0,1\}^{I \setminus \{i\}} \), where:
\begin{itemize}
    \item If \( C = X \), then $X' = \left\{ x_{I \setminus \{i\}} : x \in X,\, x_i = 0 \right\}$ and $Y' = \left\{ y_{I \setminus \{i\}} : y \in Y \right\}$.
    
    \item If \( C = Y \), then $X' = \left\{ x_{I \setminus \{i\}} : x \in X \right\}$ and $
    Y' = \left\{ y_{I \setminus \{i\}} : y \in Y,\, y_i = 0 \right\}$.
\end{itemize}
Here, \( x_{I \setminus \{i\}} \) denotes the restriction of the string \( x \in \{0,1\}^I \) to the coordinates in \( I \setminus \{i\} \).
\end{definition}

The projection operation satisfies two useful properties. The first property is that projection preserves the absence of specific structured sets:

\begin{fact}\label{fact: k_u_m}
Let \( R \subseteq \{0,1\}^I \times \{0,1\}^I \) be a rectangle such that \( R \cap D_j^I = \emptyset \) for some \( j \in I \). Then, for every \( i \in I \) and every \( C \in \{X, Y\} \), we have
\[
\Pi_{i,C}(R) \cap D_j^{I \setminus \{i\}} = \emptyset.
\]
\end{fact}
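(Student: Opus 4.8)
The plan is to prove Fact~\ref{fact: k_u_m} directly from the definitions of projection and of the structured sets $D_j^I$, by contraposition: I will show that if $\Pi_{i,C}(R)$ contains a point of $D_j^{I\setminus\{i\}}$, then $R$ itself contains a point of $D_j^I$, contradicting the hypothesis. Fix $i \in I$ and, without loss of generality, take $C = X$ (the case $C = Y$ is symmetric, swapping the roles of the two sides). Write $R = X \times Y$ and $\Pi_{i,X}(R) = X' \times Y'$, where $X' = \{x_{I\setminus\{i\}} : x \in X,\, x_i = 0\}$ and $Y' = \{y_{I\setminus\{i\}} : y \in Y\}$.

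Now suppose for contradiction that there is a pair $(x',y') \in (X'\times Y') \cap D_j^{I\setminus\{i\}}$, so $x' \wedge y' = e_j$ as vectors in $\{0,1\}^{I\setminus\{i\}}$. By definition of $X'$ there is some $x \in X$ with $x_i = 0$ and $x_{I\setminus\{i\}} = x'$; by definition of $Y'$ there is some $y \in Y$ with $y_{I\setminus\{i\}} = y'$. The key observation is that we may simply force $y_i = 0$ as well: if the $y$ we picked has $y_i = 1$, replace it by the string $\tilde y$ that agrees with $y$ on $I\setminus\{i\}$ and has $\tilde y_i = 0$. Wait --- this requires $\tilde y \in Y$, which need not hold since $Y$ is an arbitrary set. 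The correct fix is instead to note that the $i$-th coordinate of the AND is controlled by $x$: since $x_i = 0$, we have $(x \wedge y)_i = 0$ regardless of $y_i$. Hence for this $x \in X$ and $y \in Y$ we get $(x\wedge y)_i = 0$ and $(x \wedge y)_{I\setminus\{i\}} = x' \wedge y' = e_j$ (using $j \neq i$ so that $e_j$ restricted to $I\setminus\{i\}$ is still the unit vector at $j$, and the $i$-th coordinate of $e_j \in \{0,1\}^I$ is $0$). Therefore $x \wedge y = e_j$ in $\{0,1\}^I$, i.e. $(x,y) \in R \cap D_j^I$, contradicting the assumption that $R \cap D_j^I = \emptyset$. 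The case $C = Y$ is identical with $x,y$ interchanged and using $y_i = 0$ to kill the $i$-th AND coordinate.

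The main subtlety --- and the only place one can go wrong --- is exactly the point flagged above: the naive instinct is to "clean up" the non-projected side by setting its $i$-th coordinate to $0$, but the set on that side is not closed under such modifications. The resolution is that no cleanup is needed, because the projection definition already pins the $i$-th coordinate of the projected side to $0$, and the AND of a $0$ with anything is $0$, so the $i$-th coordinate of $x \wedge y$ is automatically $0$. This is why the definition of $\Pi_{i,C}$ restricts the $C$-side to strings with $i$-th coordinate $0$ rather than restricting both sides. I would present the $C = X$ case in full and remark that $C = Y$ follows by symmetry; the argument is a couple of lines once the right point is isolated.
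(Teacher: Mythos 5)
Your proof is correct and is exactly the direct verification the paper has in mind when it states that the fact ``follows directly from the definition of projection and is omitted.'' The observation you flag---that no cleanup of the non-projected coordinate is needed because the projected side already has its $i$-th bit forced to $0$, which zeroes out $(x\wedge y)_i$ regardless of $y_i$---is precisely the one-line reason the fact holds, and you also correctly note the implicit requirement $j\neq i$ needed for $D_j^{I\setminus\{i\}}$ to be well-defined.
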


\noindent
The proof of Fact~\ref{fact: k_u_m} follows directly from the definition of projection and is omitted.

The second property quantifies how projection can increase the density, and is captured by the following projection lemma:
\begin{lemma}[Projection Lemma]\label{lemma: k_u_p}
Let \( R = X \times Y \subseteq \{0,1\}^I \times \{0,1\}^I \) be a rectangle. If there exists a coordinate \( i \in I \) such that \( R \cap D_i^I = \emptyset \), then there exists \( C \in \{X, Y\} \) such that
\[
E^{I \setminus \{i\}} \big( \Pi_{i,C}(R) \big) \geq E^I(R) + \frac{1}{2}.
\]
\end{lemma}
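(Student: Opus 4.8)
\textbf{Proof proposal for the Projection Lemma (Lemma~\ref{lemma: k_u_p}).}
The plan is to directly expand $E^{I\setminus\{i\}}(\Pi_{i,C}(R))$ for both $C\in\{X,Y\}$, relate the two quantities to $E^I(R)$, and show that their "product" (i.e.\ sum of the log-densities) is already at least $2E^I(R)+1$, so that at least one of the two is at least $E^I(R)+\tfrac12$. First I would set up notation: write $R=X\times Y$, let $I'=I\setminus\{i\}$, and for a string $w\in\{0,1\}^{I}$ write $w'=w_{I'}$ for its restriction. Partition $X$ into $X_0=\{x'\,:\,x\in X,\ x_i=0\}$ and $X_1=\{x'\,:\,x\in X,\ x_i=1\}$, and similarly $Y_0,Y_1$ for $Y$; note $\Pi_{i,X}(R)=X_0\times Y'$ where $Y'=Y_0\cup Y_1$, and $\Pi_{i,Y}(R)=X'\times Y_0$ with $X'=X_0\cup X_1$. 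The key combinatorial identity is a decomposition of $|R\cap D_0^I|$ according to the value of the $i$-th coordinate of the disjoint pair: the $i$-th coordinate assignment is one of $(0,0),(0,1),(1,0)$, giving
\[
|R\cap D_0^I| \;=\; |X_0\times Y_0 \cap D_0^{I'}| \;+\; |X_0\times Y_1\cap D_0^{I'}| \;+\; |X_1\times Y_0\cap D_0^{I'}|,
\]
and since $|D_0^I| = 3\,|D_0^{I'}|$, dividing through expresses $2^{E^I(R)}$ as the average of three "sub-densities". Crucially, the hypothesis $R\cap D_i^I=\emptyset$ says precisely that there is \emph{no} disjoint pair in $R$ whose intersection is exactly $e_i$; the pair-with-$i$-coordinate-$(1,1)$ case would be the one contributing to $D_i^I$, so the absence of that term is automatic, but more importantly it forces a structural constraint linking $X_1$ and $Y_1$ through the rest of the coordinates, which is what we exploit.

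The heart of the argument is the following: I claim $|X_1\times Y_1\cap D_0^{I'}| = 0$ is \emph{not} what we get; rather, the constraint $R\cap D_i^I=\emptyset$ means that whenever $x\in X$ has $x_i=1$ and $y\in Y$ has $y_i=1$, the restricted pair $(x',y')$ cannot be disjoint on $I'$ — because if it were, then $(x,y)$ would be a pair in $R$ with $x\wedge y = e_i$, contradicting $R\cap D_i^I=\emptyset$. Hence $X_1\times Y_1$ contains \emph{no} element of $D_0^{I'}$, i.e.
\[
(X_1\times Y_1)\cap D_0^{I'} = \emptyset.
\]
Now I would run an inclusion–exclusion / sub-modularity style estimate on the four sets $X_0,X_1,Y_0,Y_1$ restricted to $D_0^{I'}$. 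Writing $a=|X_0\times Y_0\cap D_0^{I'}|$, $b=|X_0\times Y_1\cap D_0^{I'}|$, $c=|X_1\times Y_0\cap D_0^{I'}|$, $d=|X_1\times Y_1\cap D_0^{I'}|=0$, we have $3\cdot 2^{E^I(R)}|D_0^{I'}| = a+b+c$, while $2^{E^{I'}(\Pi_{i,X}(R))}|D_0^{I'}| = |X_0\times(Y_0\cup Y_1)\cap D_0^{I'}| = a+b$ (using that $X_0\times Y_0$ and $X_0\times Y_1$ partition $X_0\times(Y_0\cup Y_1)$ by the $i$-th coordinate of $y$ — wait, this needs care since $Y_0,Y_1$ need not be disjoint as subsets of $\{0,1\}^{I'}$; I would instead work with the genuine partition and recount, so that $\Pi_{i,X}(R)\cap D_0^{I'}$ exactly equals the union, and a string in it may arise from both $Y_0$ and $Y_1$). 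Getting this bookkeeping exactly right — i.e., correctly relating the density of the projection to $a+b$ (resp.\ $a+c$) \emph{with the right multiplicative constant coming from} $|D_0^{I'}|$ versus $|D_0^I|$ — is the main obstacle, and where I expect the constant $\tfrac12$ to be forced. Modulo that, the finish is an AM–GM step: from $a+b+c = 3N$ (with $N=2^{E^I(R)}|D_0^{I'}|$) and $d=0$, the two projection densities correspond (up to the constant factor, which I would track as exactly $1$ after normalizing by $|D_0^{I'}|$) to quantities $\ge a+b$ and $\ge a+c$; since $(a+b)+(a+c) = 2a+b+c \ge$ something forcing the product $(a+b)(a+c)\ge 2N^2$, we get $\max\{a+b,\,a+c\}\ge \sqrt{2}\,N$, i.e.\ $\max_C E^{I'}(\Pi_{i,C}(R)) \ge E^I(R) + \tfrac12$.

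To make the AM–GM step legitimate I would need a lower bound of the form $(a+b)(a+c)\ge 2N^2$ where $a+b+c=3N$. This does \emph{not} hold for all nonnegative $a,b,c$ (e.g.\ $a=0$ kills it), so the real content must be an additional inequality relating $a$ to $b,c$ — concretely, a ``rectangle times disjointness'' counting bound asserting $a = |X_0\times Y_0\cap D_0^{I'}|$ is large relative to $b$ and $c$ because $X_0\supseteq$ the projections appearing in $b$ and $Y_0\supseteq$ those in $c$. I expect the clean statement to be $a\cdot|D_0^{I'}| \ge b\cdot c / N$ or a Cauchy–Schwarz variant $a\ge bc/N$ coming from the fact that $D_0^{I'}$ behaves like a product measure across coordinates, and then $(a+b)(a+c) = a^2 + a(b+c) + bc \ge a(a+b+c) + bc \ge 3aN$ is not quite enough, so the precise form of this cross-term inequality is the crux. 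I would prove it by the same character/Fourier expansion used in the proof of Lemma~\ref{disperser}, or more elementarily by noting $D_0^{I'}$ is a downward-closed product-like set so that restricting one side only loses a controlled factor. In summary: the architecture is (i) coordinate-split identity for $|R\cap D_0^I|$, (ii) observe $(X_1\times Y_1)\cap D_0^{I'}=\emptyset$ from the hypothesis, (iii) a cross-term counting inequality among $a,b,c$, (iv) AM–GM to extract the better of the two sides; step (iii) is the hard part.
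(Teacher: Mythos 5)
Your coordinate-split identity $|R\cap D_0^I| = a+b+c$ is correct and does match the starting point of the paper's proof, but the argument you build on top of it has a genuine gap that your own hedging does not repair. The most serious problem is that the inequality you need points the wrong way: since $\Pi_{i,X}(R) = X_0\times(Y_0\cup Y_1)$ and the sets $Y_0,Y_1\subseteq\{0,1\}^{I'}$ can overlap, one has
\[
|\Pi_{i,X}(R)\cap D_0^{I'}| \;=\; a+b-\big|\{(x',y')\in D_0^{I'}: x'\in X_0,\ y'\in Y_0\cap Y_1\}\big| \;\le\; a+b,
\]
so $a+b$ is an \emph{upper} bound on the projection count, not the lower bound your AM--GM step requires; the same holds for $a+c$. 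Moreover, as you yourself observe, $(a+b)(a+c)\ge 2N^2$ fails when $a$ is small, and the speculative cross-term inequality $a\ge bc/N$ is not established anywhere and is not what rescues the argument.

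The missing ingredient is a purely combinatorial fact, not an analytic estimate: from $R\cap D_i^I=\emptyset$ and the rectangle property it follows that each $(x',y')\in D_0^{I'}$ has at most two extensions in $R\cap D_0^I$, and a pair with two extensions must have the $(0,0)$ extension (the $(0,1)$ and $(1,0)$ extensions cannot coexist, as together with the rectangle property they would produce a point in $R\cap D_i^I$). Let $A$ be the pairs with two extensions and $B$ those with one; then $a+b+c=2|A|+|B|$, every $(x',y')\in A$ lies in both $\Pi_{i,X}(R)$ and $\Pi_{i,Y}(R)$, every $(x',y')\in B$ lies in at least one, and an averaging over $C\in\{X,Y\}$ gives some $C$ with $|\Pi_{i,C}(R)\cap D_0^{I'}|\ge |A|+|B|/2 = |R\cap D_0^I|/2$, which together with $|D_0^I|=3|D_0^{I'}|$ yields increment $\log(3/2)>1/2$. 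In your bookkeeping, the overlap terms you were worried about are exactly the two halves of $|A|$, and the inequality that actually closes the argument is $|A|\le a$ (because $A\subseteq (X_0\times Y_0)\cap D_0^{I'}$); your ``hard step (iii)'' is therefore a red herring, since no Fourier or product-measure estimate is used in the paper's proof. You also never use your correct observation that $(X_1\times Y_1)\cap D_0^{I'}=\emptyset$; the relevant structural fact is the stronger extension-counting one above.
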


Given Lemma~\ref{lemma: k_u_p} and Fact~\ref{fact: k_u_m}, the proof of Lemma~\ref{pseudorandom} becomes straightforward. We iteratively apply the projection operation on coordinates \( i \notin S \), choosing at each step a suitable side \( C \in \{X, Y\} \) as guaranteed by Lemma~\ref{lemma: k_u_p}, which increases the density function by at least \( 1/2 \) per step. Since the density is upper bounded by zero, this process can be repeated at most \( 2c \) times, yielding a subset \( S \subseteq [n] \) of size at least \( n - 2c \) such that \( R \cap D_i \neq \emptyset \) for all \( i \in S \).

We now proceed to prove Lemma~\ref{lemma: k_u_p}.

\begin{proof}[Proof of Lemma~\ref{lemma: k_u_p}]
Let \( R \subseteq \{0,1\}^I \times \{0,1\}^I \) be a rectangle such that \( R \cap D_i^I = \emptyset \). Define \( I' := I \setminus \{i\} \), and let
\[
L := \left\{ (x', y') \in D_0^{I'} : \exists (x, y) \in R \cap D_0^I \text{ such that } x_{I'} = x',~ y_{I'} = y' \right\}.
\]
Note that for any \( C \in \{X, Y\} \), we have
\[
\Pi_{i,C}(R) \cap D_0^{I'} = \Pi_{i,C}(R) \cap L.
\]
Our goal is to show that there exists \( C \in \{X,Y\} \) such that \( |\Pi_{i,C}(R) \cap L| \) is large.

For each \( (x', y') \in L \), define the extension set
\[
\ext(x', y') := \left\{ (x, y) \in R \cap D_0^I : x_{I'} = x',~ y_{I'} = y' \right\}.
\]
We now show that
\begin{equation} \label{eq:ext_k}
|\ext(x', y')| \leq 2.
\end{equation}

\noindent This follows from the assumption \( R \cap D_i^I = \emptyset \). Suppose, for contradiction, that \( |\ext(x', y')| = 3 \). Then all three extensions of \( (x', y') \) to coordinate \( i \) namely, \( (x_i, y_i) = (0,0), (1,0), (0,1) \) must be present in \( R \). by the rectangle property of $R$, which would imply exists a pair \( (x, y) \in R \cap D_i^I \) with $(x_i, y_i) = (1,1)$, a contradiction. Hence, \eqref{eq:ext_k} holds.

Next, we partition \( L \) into two sets:
\[
A := \left\{ (x', y') \in L : |\ext(x', y')| = 2 \right\}, \quad
B := \left\{ (x', y') \in L : |\ext(x', y')| = 1 \right\}.
\]
Observe that for any \( (x', y') \in A \), both \( (x',0) \in X \) and \( (y',0) \in Y \), since \( R \) is a rectangle and both extensions with \( x_i = 0 \) and \( y_i = 0 \) must be in \( R \). Therefore,
\[
(x', y') \in \Pi_{i,C}(R) \quad \text{for all } C \in \{X,Y\},
\]
which implies $|A| = |A \cap \Pi_{i,C}(R)| \quad \text{for any } C$.

For any such \( C \), by \eqref{eq:ext_k}, we have
\[
2 \cdot |A \cap \Pi_{i,C}(R)| = 2 \cdot |A| \geq \left| \left\{ (x, y) \in R \cap D_0^I : (x_{I'}, y_{I'}) \in A \right\} \right|.
\]

For the \( B \) part, note that each \( (x', y') \in B \) corresponds to a unique element in \( R \cap D_0^I \), so
\[
\left| \left\{ (x, y) \in R \cap D_0^I : (x_{I'}, y_{I'}) \in B \right\} \right| = |B|.
\]
Moreover, for each \( (x', y') \in B \), there exists at least one choice of \( C \in \{X,Y\} \) such that \( (x', y') \in \Pi_{i,C}(R) \). By an averaging argument, there exists \( C \in \{X, Y\} \) such that
\[
2 \cdot |B \cap \Pi_{i,C}(R)| \geq |B| = \left| \left\{ (x, y) \in R \cap D_0^I : (x_{I'}, y_{I'}) \in B \right\} \right|.
\]

Putting both parts together, we have for this fixed \( C \),
\begin{align*}
2 \cdot |L \cap \Pi_{i,C}(R)|
&= 2 \cdot |A \cap \Pi_{i,C}(R)| + 2 \cdot |B \cap \Pi_{i,C}(R)| \\
&\geq \left| \left\{ (x, y) \in R \cap D_0^I : (x_{I'}, y_{I'}) \in A \right\} \right| 
+ \left| \left\{ (x, y) \in R \cap D_0^I : (x_{I'}, y_{I'}) \in B \right\} \right| \\
&= |R \cap D_0^I|.
\end{align*}

Finally, using the definition of the density function:
\begin{align*}
E^{I'}(\Pi_{i,C}(R))
&= \log \left( \frac{|\Pi_{i,C}(R) \cap D_0^{I'}|}{3^{|I'|}} \right)
= \log \left( \frac{3 \cdot |\Pi_{i,C}(R) \cap L|}{3^{|I|}} \right) \\
&\geq \log \left( \frac{3 \cdot |R \cap D_0^I|}{2 \cdot 3^{|I|}} \right)
= E^I(R) + \log(3/2) \geq E^I(R) + \frac{1}{2}.
\end{align*}
This completes the proof.
\end{proof}

\end{document}